\numberwithin{equation}{section}
\def\@seccntformat#1{\@ifundefined{#1@cntformat}%
   {\csname the#1\endcsname\quad}  
   {\csname #1@cntformat\endcsname}
}
\let\oldappendix\appendix 
\renewcommand\appendix{%
    \oldappendix
    \newcommand{\section@cntformat}{\appendixname~\thesection\quad}
}
    \def\qed{\hfill$\sqcap\kern-8.0pt\hbox{$\sqcup$}$\\}
\newtheorem{theorem}{Theorem}
\newtheorem{lemma}{Lemma}
\newtheorem{proposition}{Proposition}
\newtheorem{corollary}{Corollary}
\theoremstyle{definition}
\newtheorem{definition}{Definition}
\newtheorem{example}{Example}
 \date{}
\begin{document}

\noindent
{\Large \bf Multiple risk factor dependence structures: Copulas and related properties}

\vspace*{8mm}
\noindent
{\large Jianxi Su$^{a}$, Edward Furman$^{*,b}$}

\noindent
$^a$ Department of Statistics, Purdue University, West Lafayette, IN 47907, United State

\noindent
$^b$ Department of Mathematics and Statistics, York University, Toronto, ON M3J 1P3, Canada

\noindent
\rule{165mm}{0.2mm}
\vspace{-10mm}
\begin{quote}
\textbf{Abstract.}
Copulas have become an important tool in the modern best practice Enterprise Risk Management, often supplanting other approaches to modelling stochastic dependence. However, choosing the `right' copula is not an easy task, and the temptation to prefer a tractable rather than a meaningful candidate from the encompassing copulas toolbox is strong. The ubiquitous applications of the Gaussian copula is just one illuminating example.

Speaking generally,  a `good' copula should conform to the problem at hand, allow for asymmetry in the domain of definition and exhibit some extent of tail dependence. In this paper we introduce and study a new class of Multiple Risk Factor (MRF) copula functions, which we show are exactly such. Namely, the MRF copulas  (1) arise from a number of meaningful default risk specifications with stochastic default barriers,  (2) are in general non-exchangeable and (3) possess a variety of tail dependences. That being said, the MRF copulas turn out to be surprisingly tractable analytically.

\bigskip

\noindent\textit{Keywords and phrases: }
Multivariate distributions, (tail) dependence, Archimedean copulas, Marshall-Olkin copulas, factor models, default risk.

\noindent {\it JEL Classification}: C02, C51.

\vspace*{12mm}
\noindent
\rule{100mm}{0.2mm}
\\
{\footnotesize
$^{*}$Corresponding author. Tel. +1(416)736-2100 ext. 33768. \\
E-mail addresses: jianxi@purdue.edu (J. Su), efurman@mathstat.yorku.ca (E. Furman).
}

\end{quote}

\newpage

\section{Introduction}
\label{sec-1}

Copulas are beautiful mathematical constructions, and as such they have become a well established quantitative tool in actuarial and financial research and practice (e.g., Denuit et al., 2005; McNeil et al., 2005; and references therein). However, with the tractability comes a price. Namely, while we must choose a copula depending on the problem at hand, this choice is somewhat vague. As a result, practitioners often choose copulas due to the mathematical convenience, rather than because of meaningful connections to the phenomena they model.  The reason is that such connections are frequently very difficult to find. Luckily there are exceptions.

To set off, we recall that an $n(\in\mathbf{N})$-variate function
$C : [0, 1]^n\rightarrow  [0, 1]$ is a copula, if it is grounded, $n$-increasing, and have uniformly distributed margins (e.g., Joe, 1997;  Nelsen, 2006).
\begin{example}[Marshall-Olkin (MO) copula (e.g., Cherubini et al., 2013)]
\label{ex-1}
Consider spouses that purchase a life insurance. The joint future lifetime of the couple can be modelled by the random variable (r.v.) $(\tau_1,\ \tau_2)'\in\mathbf{R}_{0,+}^2:=[0,\infty)^2$, such that
\begin{equation}
\label{sr-mo-1}
\tau_1={}_1E_{\lambda_1}\wedge E_{\lambda_0} \textnormal{ and } \tau_2={}_2E_{\lambda_2}\wedge E_{\lambda_0},
\end{equation}
where ${}_1E_{\lambda_1}$, ${}_2E_{\lambda_2}$ and $E_{\lambda_0}$ are all exponentially distributed and stochastically independent r.v.'s having positive scale parameters  $\lambda_1,\ \lambda_2$ and $\lambda_0$, respectively (Bowers et al., 1997).

Stochastic representation (\ref{sr-mo-1}) is quite natural, as the spouses may die either at independent future times ${}_1E_{\lambda_1}$ and ${}_2E_{\lambda_2}$ - as a result of the individual mortality, or simultaneously (fully comonotonically) - as a result of a joint fatal hazard (common shock). The joint survivorship probability of the future lifetimes is then (Marshall and Olkin, 1967)
\[
\mathbf{P}[\tau_1>s,\ \tau_2>t]=\exp\{-\lambda_1s-\lambda_2t-\lambda_0(s\vee t)\},
\textnormal{ where }s,t\in \mathbf{R}_{0,+},
\]
and a routine application of Sklar's theorem (Sklar, 1959) yields that the corresponding copula function (e.g., Cherubini et al., 2013) is
\[
C(u,v)=u^{1-\lambda_0/(\lambda_1+\lambda_0)}v\wedge
v^{1-\lambda_0/(\lambda_2+\lambda_0)}u,
\textnormal{ where } u,v\in [0,\ 1].
\]
In summary, the (bivariate) MO copula can be mapped to a stochastic representation that describes a meaningful real world phenomenon of interest to actuaries.
\end{example}

\begin{example}[Clayton copula (Clayton, 1978)]
\label{ex-2}
Consider two risk components in a portfolio of default risks, and let the coordinates of the r.v. $(\tau_1,\ \tau_2)'\in\mathbf{R}_{0,+}^2$ denote the default times of these risk components. Furthermore, let ${}_1E_{\lambda}$ and ${}_2E_{\lambda}$ be two exponentially distributed r.v.'s that are independent mutually as well as on a gamma distributed r.v. $\Lambda$ having shape parameter $\gamma(\in\mathbf{R}_{+})$ and unit scale parameter. Last but not least, denote by `$\ast$' the mixture operator, such that ${}_1E_\Lambda{\overset{d}{=}}{}_1E_\lambda\ast\Lambda$ and ${}_2E_\Lambda{\overset{d}{=}}{}_2E_\lambda\ast\Lambda$, where `$\overset{d}{=}$' denotes equality in distribution.
Then we may be interested in the following default specification
\begin{equation}
\label{sr-cl-1}
\tau_{1}={}_1E_{\Lambda} \textnormal{ and }
\tau_{2}={}_2E_{\Lambda}.
\end{equation}

Stochastic representation (\ref{sr-cl-1}) is a simplification of the CreditRisk$^+$ approach to modelling the risk of default (Bielecki and Rutkowski, 2004), and it is easy to see that the corresponding joint survival function is (e.g., Albrecher et al., 2011; Su and Furman, 2016a)
\[
\mathbf{P}[\tau_1>s,\ \tau_2>t]=\left(
1+s+t
\right)^{-\gamma},
\textnormal{ where }
s,t\in\mathbf{R}_{0,+}.
\]
Moreover, the obtained dependent times of occurrence (hitting times) $(\tau_1,\ \tau_2)'$ are positively quadrant dependent (PQD) (Lehmann, 1966),
and Sklar's theorem yields the following copula (e.g., Joe, 1997; Nelsen, 2006)
\[
C(u,v)=\left(u^{-\gamma}+v^{-\gamma}-1\right)^{-1/\gamma},
\textnormal{ where } u,v\in[0,\ 1].
\]
Hence, similarly to the case of the Marshall-Olkin copula in Example \ref{ex-1}, the Clayton copula admits a stochastic representation that is of interest to (credit) risk professionals.
\end{example}


The goal of this paper is to introduce and study a class of copula functions that unify, among others, the MO and Clayton copulas discussed in Examples \ref{ex-1} and \ref{ex-2}, respectively. More specifically, the Multiple Risk Factor (MRF) copulas introduced herein admit meaningful stochastic representations, are non-exchangeable and allow for a significant variety of tail dependences, and nevertheless are surprisingly tractable analytically. Immediate areas of application of the MRF copulas are life insurance and default risk management. E.g., in the latter context, the MRF dependencies describe default risk portfolios, which are exposed to an arbitrary number of fatal risk factors having conditionally exponential hitting times that can be independent, positively orthant dependent (POD) (Lehmann, 1966) and even fully comonotonic (Dhaene et al., 2002a,b).

The rest of the paper is organized as follows.  In Section \ref{sec-con} we introduce the MRF copulas in their most general form along with the various links to default specifications having stochastic default barriers. One of the interesting peculiarities of the MRF copulas is the fact that they are not absolutely continuous with respect to the Lebesgue measure, thus allowing for a non-zero probability of simultaneous default.  We study the phenomenon of simultaneous default generally in Section \ref{sec-sing}, and we specialize the discussion to the context of the Clayton subclass of the MRF copulas in Section \ref{third-con}, where we also study the dependence properties of the Clayton MRF copulas thoroughly. Last but not least, we explore the extremal (tail) dependence behaviour of the Clayton MRF copulas in Section \ref{forth-con}, where we employ both the classic indices of tail dependence and the new notion of maximal tail dependence introduced recently in Furman et al. (2015). Section \ref{sec-concl}
concludes the paper. The proofs are relegated to the appendix.

\section{Construction of the multiple risk factor copula functions and some basic properties}
\label{sec-con}

Consider a risk portfolio (r.p.) that consists of $n$ risk components (r.c.'s) with the labels in the set $\{1,\ldots,n\}$. Let $\boldsymbol{X}=(X_1,\ldots,X_n)'$ denote a r.v. with the $i$-th coordinate interpreted as the default time of the $i$-th r.c. with $i=1,\ldots,n$, and assume that each r.c. is exposed to some of (or all of) $(l+m)$ fatal risk factors (r.f.'s) of which $l(\in\mathbf{N})$ r.f.'s have fully-comonotonic hitting times and $m(\in\mathbf{N})$ r.f.'s have  POD hitting times. Further, let the block matrix $c=(c^l,\ c^m)\in Mat_{n\times (l+m)}(\mathbf{1})$ have entries in $\mathbf{1}:=\{0,\ 1\}$ and describe the exposure of the r.p. $\{1,\ldots,n\}$ to the distinct r.f.'s in the set $\{1,\ldots,l+m\}$; we assume that the matrix $c$ is deterministic and may in practice be chosen by the senior risk management.
Finally, let the sets $\mathcal{RF}^l_i=\{j\in\{1,\ldots,l\}:\ c^l_{i,j}=1\}$, $\mathcal{RF}^m_i=\{j\in\{l+1,\ldots,l+m\}:\ c^m_{i,j}=1\}$ and $\mathcal{RF}_i=\mathcal{RF}_i^l \cup \mathcal{RF}_i^m$ contain the r.f.'s that `hit' the $i$-th r.c., $i=1,\ldots,n$. Similarly, denote by $\mathcal{RC}_j=\{i\in\{1,\ldots,n\}:\ c_{i,j}=1\}$ the set that contains all the r.c.'s that are hit by the $j$-th r.f., $j=1,\ldots,l+m$.

To make the distributional structure underlying the general set-up above tractable analytically, we assume hereafter that 
\begin{itemize}
\item  [(A1)] for a fixed r.f. in the sets $\{1,\ldots,l\}$ and $\{l+1,\ldots,l+m\}$, the hitting time r.v.'s are conditionally fully-comonotonic and conditionally independent, respectively;
\item [(A2)] the r.v. $\boldsymbol{\Lambda}:=(\Lambda_1,\ldots,\Lambda_{l+m})'$ gathers the uncertainty about r.f.'s, and the coordinates $\Lambda_1,\ldots,\Lambda_{l+m}$ are mutually independent stochastically;
\item  [(A3)] for varying r.f.'s in the set $\{1,\ldots,l+m\}$, the hitting time r.v.'s are stochastically independent and distributed exponentially, succinctly $E_{\lambda_1},\ldots,E_{\lambda_l}$ and
${}_iE_{\lambda_{l+1}},\ldots,{}_iE_{\lambda_{l+m}}$, given $\Lambda_1=\lambda_1,\ldots, \Lambda_{l+m}=\lambda_{l+m}$, where $i=1,\ldots,n$, and $\lambda_1,\ldots,\lambda_{l+m}$ are all positive.
\end{itemize}

We have already mentioned the notion of mixture operator (Example \ref{ex-2}). More specifically, given two appropriately jointly measurable r.v.'s $X_\beta\sim C(\beta)$ with $\beta\in\mathcal{B}\subseteq \mathbf{R}$ and $B\sim H$, the `mixture' r.v. $X_B$  has the same distribution as $X_\beta\ast B$, where the r.v. $B$ has its range in $\mathcal{B}$.
Then, for $i=1,\ldots,n$,
let $N_{\lambda_jt},\ j=1,\ldots,l$ and ${}_iN_{\lambda_jt}, \ j=l+1,\ldots,l+m$  denote stochastically independent homogeneous
Poisson processes with intensities $\lambda_j$ such that
$\mathbf{P}[N_{\lambda_jt}=0]=\mathbf{P}[E_{\lambda_j}>t]$ and
$\mathbf{P}[{}_iN_{\lambda_j t}=0]=\mathbf{P}[{}_iE_{\lambda_j}>t]$ for any $t\in  \mathbf{R}_{0,+}$. Finally, set
\begin{eqnarray}
\label{MRF-poisson}
\tau_i=\inf \left\{t\in \mathbf{R}_{0,+}:
\sum_{j\in\mathcal{RF}^l_i} N_{\Lambda_jt}+\sum_{j\in\mathcal{RF}^m_i} {}_iN_{\Lambda_jt}>0 \right\}
\end{eqnarray}
to represent the default time of the $i$-th r.c., where
$\Lambda_j,\ j\in\mathcal{RF}_i$ are positive r.v.'s and $i=1,\ldots,n$.
Given assumptions (A2) and (A3) above,
it is easy to show that, for $t\in \mathbf{R}_{0,+}$, the marginal survival probability of  $\tau_i$ is
\begin{equation}
S_i(t):=\mathbf{P}[\tau_i> t]
=\psi_{{\sum_{j\in\mathcal{RF}_i} \Lambda_j}}(t),
\label{marginal-general-ddf}
\end{equation}
where $\psi_{{\sum_{j\in\mathcal{RF}_i} \Lambda_j}}(t)=\mathbf{E}[e^{-\sum_{j\in\mathcal{RF}_i} \Lambda_j t}]$; here and throughout $\psi_X(x)$ denotes the Laplace transform of the r.v. $X$ evaluated at $x\in\mathbf{R}_{0,+}$.
In a similar fashion and with a bit of an effort, we show that, for $t_i\in \mathbf{R}_{0,+},\
i=1,\ldots,n$ and
$\boldsymbol{\Lambda}$ as before, the joint survival probability is given by
\begin{equation}
S(t_1,\ldots,t_n):=\mathbf{P}[\tau_1>t_1,\ldots,\tau_n>t_n]\nonumber\\
=
\prod_{j=1}^{l}\psi_{\Lambda_j}\left(\bigvee_{i\in\mathcal{RC}_j}t_i\right)
\prod_{j=l+1}^{l+m}\psi_{\Lambda_j}\left(\sum_{i\in\mathcal{RC}_j}t_i\right).
\label{multivariate-general-ddf}
\end{equation}

In practice, the mixed (doubly stochastic) Poisson processes that generate defaults must not be necessarily homogeneous. Namely, we may be interested in the integrated intensities
$\Lambda_j(t),\ t\in \mathbf{R}_{0,+},\ j=1,\ldots,l+m$, which are real valued, continuous and increasing stochastic processes
such that $\Lambda_j(0)=0$. As a result (\ref{MRF-poisson}) can be generalized to
\begin{equation}
\label{Non-hom-defaults}
\tau_i=\inf \left\{t\in \mathbf{R}_{0,+}: \sum_{j\in \mathcal{RF}^l_i} \ N_{\Lambda_{j}(t)}+\sum_{j\in \mathcal{RF}_i^m}\ {}_iN_{\Lambda_{j}(t)}>0 \right\},\
\end{equation}
where  $i=1,\ldots,n$. The survival function of $\tau_i$ as well as the joint survival function of $(\tau_1,\ldots,\tau_n)'$ are formulated in the next
theorem. The proofs are omitted, as they very much resemble the derivations that led to (\ref{marginal-general-ddf}) and
(\ref{multivariate-general-ddf}).

\begin{theorem}
\label{pro-survial probability}
For default specification (\ref{Non-hom-defaults}) and assuming that
$\Lambda_j(t),\ t\in \mathbf{R}_{0,+}$ are real valued,  continuous and increasing stochastic processes
with $\Lambda_j(0)=0,\ j=1,\ldots,l+m$, the marginal survival probability of the $i$-th r.c. is given by
\begin{eqnarray}
S_i(t)= \psi_{\sum_{j\in \mathcal{RF}_i}\Lambda_{j}(t)}(1) \label{uniddf},
\end{eqnarray}
for $i=1,\ldots,n$.
Also, the joint survivorship probability of the risk portfolio $\{1,\ldots,n\}$ is formulated as
\begin{eqnarray}
S(t_1,\ldots,t_n)=
\prod_{j=1}^{l}\psi_{\Lambda_{j}(\bigvee_{i\in\mathcal{RC}_j}t_i)}(1) \prod_{j=l+1}^{l+m} \psi_{\sum_{i\in\mathcal{RC}_j}\Lambda_{j}(t_i)}(1) \label{jointddf},
\end{eqnarray}
where $t_i\in \mathbf{R}_{0,+},\ i=1,\ldots,n$.
\end{theorem}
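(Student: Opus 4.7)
The plan is to reproduce the conditional-independence argument that yielded \eqref{marginal-general-ddf} and \eqref{multivariate-general-ddf}, with the constant-intensity quantities $\lambda_j t$ replaced by the integrated random intensities $\Lambda_j(t)$. The natural approach is to condition on the entire collection of sample paths $\{\Lambda_j(\cdot):\ j=1,\ldots,l+m\}$ and use that, given these paths, every Poisson count of the form $N_{\Lambda_j(t)}$ or ${}_iN_{\Lambda_j(t)}$ appearing in \eqref{Non-hom-defaults} is a Poisson variable whose probability of being zero is $\exp(-\Lambda_j(t))$.

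\textbf{Step 1 (marginal \eqref{uniddf}).} The event $\{\tau_i>t\}$ coincides with the intersection, over $j\in\mathcal{RF}_i$, of the events that the relevant Poisson count at time $t$ is zero. By the non-homogeneous analogues of (A1) and (A3), these counts are conditionally independent across $j\in\mathcal{RF}_i$ given $\boldsymbol{\Lambda}(\cdot)$, so the conditional survival probability factorises to $\exp(-\sum_{j\in\mathcal{RF}_i}\Lambda_j(t))$. Taking the unconditional expectation then yields $\psi_{\sum_{j\in\mathcal{RF}_i}\Lambda_j(t)}(1)$ by the very definition of the Laplace transform used in the paper.

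\textbf{Step 2 (joint \eqref{jointddf}).} The key manoeuvre is a reindexing: the event $\bigcap_{i=1}^{n}\{\tau_i>t_i\}$ is the conjunction, over all pairs $(i,j)$ with $c_{i,j}=1$, of the statement ``the Poisson process associated with the pair $(i,j)$ has no arrivals by time $t_i$''; one regroups these constraints by the factor index $j$, replacing $\sum_{j\in\mathcal{RF}_i}$ with $\sum_{i\in\mathcal{RC}_j}$. For a comonotonic factor $j\in\{1,\ldots,l\}$, the \emph{same} process $N_{\Lambda_j(\cdot)}$ must have no arrivals by each time $t_i$, $i\in\mathcal{RC}_j$, which collapses pathwise to ``no arrivals by $\bigvee_{i\in\mathcal{RC}_j}t_i$'' and contributes the conditional factor $\exp(-\Lambda_j(\bigvee_i t_i))$. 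For a conditionally independent factor $j\in\{l+1,\ldots,l+m\}$, the processes ${}_iN_{\Lambda_j(\cdot)}$, $i\in\mathcal{RC}_j$, are conditionally independent given $\Lambda_j$, contributing $\exp(-\sum_{i\in\mathcal{RC}_j}\Lambda_j(t_i))$. Assumption (A2) then lets the expectation factorise into the product over $j\in\{1,\ldots,l+m\}$ stated in \eqref{jointddf}.

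The main subtlety is the comonotonic block: one must justify pathwise that, for a fixed realization of $\Lambda_j(\cdot)$, the event $\{N_{\Lambda_j(t_i)}=0\text{ for all }i\in\mathcal{RC}_j\}$ coincides with $\{N_{\Lambda_j(\bigvee_i t_i)}=0\}$. This is precisely where the theorem's explicit hypotheses enter—continuity, monotonicity of $\Lambda_j(\cdot)$, and $\Lambda_j(0)=0$—because they guarantee that the time-changed Poisson process is itself a counting process, so that an arrival by any $t_i$ forces an arrival by the maximum $\bigvee_i t_i$. Everything else is routine bookkeeping identical in spirit to the derivation of \eqref{marginal-general-ddf} and \eqref{multivariate-general-ddf}, which is presumably why the authors omit it.
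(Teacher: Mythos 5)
Your proposal is correct and follows essentially the same route as the paper: the appendix proofs of (\ref{marginal-general-ddf}) and (\ref{multivariate-general-ddf}) condition on $\boldsymbol{\Lambda}$, regroup the no-arrival constraints from $\sum_{j\in\mathcal{RF}_i}$ to $\sum_{i\in\mathcal{RC}_j}$, collapse each shared (comonotonic) factor to the maximum time and factor the conditionally independent block, and the paper states that Theorem \ref{pro-survial probability} is proved by exactly this argument with $\lambda_j t$ replaced by $\Lambda_j(t)$, which is what you do. Your explicit remark on why continuity, monotonicity and $\Lambda_j(0)=0$ make the pathwise collapse to $\bigvee_{i\in\mathcal{RC}_j}t_i$ legitimate is a fair elaboration of a step the paper leaves implicit.
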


We next show that the general form of the MRF dependencies, and hence  (\ref{Non-hom-defaults}), admit the so called
default specification with stochastic default barrier.
To this end, for $i=1,\ldots,n$ and $t\in\mathbf{R}_{0,+}$, let
\begin{equation}\label{Theta}
\Theta_{i}(t)=\sum_{j\in \mathcal{RF}^l_i}\mathbf{I}^{\infty}_{\{N_{\Lambda_j(t)}>0\}}+\sum_{j\in \mathcal{RF}^m_i}\Lambda_j(t),
\end{equation}
where (Jacod and Shiryaev, 2003)
\[
\mathbf{I}^{\infty}_{\{N_{\Lambda_j(t)}>0\}}:=\left\{
                                     \begin{array}{ll}
                                       0, & \hbox{$N_{\Lambda_j(t)}=0$} \\
                                       \infty, & \hbox{$N_{\Lambda_j(t)}>0$}
                                     \end{array}
                                   \right..
\]

\begin{theorem}
\label{pro-alternative def1}
Let ${}_iE_1\sim Exp(1)$ and $U_i\sim Uni[0,\ 1],\ i=1,\ldots,n$ denote stochastically independent r.v.'s that are, respectively,
exponentially distributed with unit scale parameters, and uniformly distributed on $[0,\ 1]$.
Then the following stochastic default specifications are equivalent mutually as well as to  (\ref{Non-hom-defaults})
\begin{itemize}
\item[(D1)]
exponential default barrier representation
\begin{eqnarray}
\label{formula-alternative def1}
\tau_i=\inf\left\{t\in\mathbf{R}_{0,+}:\ \Theta_i(t)\geq {}_iE_1 \right\};
\end{eqnarray}
\item[(D2)] uniform default barrier representation
\begin{eqnarray}
\label{formula-alternative def3}
\tau_i=\inf\{t\in\mathbf{R}_{0,+}:\ \exp\{-\Theta_i(t)\}\leq U_i \}.
\end{eqnarray}
\end{itemize}
\end{theorem}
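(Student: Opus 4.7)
The plan is to establish the three equivalences in two steps: first (D1) $\Leftrightarrow$ (D2) by a deterministic pointwise transformation of the individual thresholds, and then (D1) equivalent in distribution to (\ref{Non-hom-defaults}) by computing and matching joint survival functions. Throughout, I keep $\boldsymbol{\Lambda}$ and the underlying doubly-stochastic Poisson processes $N_{\Lambda_j(\cdot)}$ and ${}_iN_{\Lambda_j(\cdot)}$ fixed, so that what varies between the specifications is only the way the idiosyncratic randomness enters.

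For (D1) $\Leftrightarrow$ (D2), I would use the bijection between $Exp(1)$ and $Uni[0,1]$: the map ${}_iE_1\mapsto U_i:=\exp\{-{}_iE_1\}$ preserves mutual independence across $i$ and remains independent of the Poisson/intensity structure, and under this coupling
\[
\{\Theta_i(t)\geq {}_iE_1\}=\{\exp\{-\Theta_i(t)\}\leq\exp\{-{}_iE_1\}\}=\{\exp\{-\Theta_i(t)\}\leq U_i\}.
\]
Because $\Theta_i(t)$ is non-decreasing in $t$ (its indicator summands are non-decreasing in $t$ by monotonicity of $N_{\Lambda_j(\cdot)}$, and the intensity summands by hypothesis in Theorem \ref{pro-survial probability}), the infima in (\ref{formula-alternative def1}) and (\ref{formula-alternative def3}) coincide almost surely under this coupling, which gives the distributional equivalence.

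For (\ref{Non-hom-defaults}) $\Leftrightarrow$ (D1), I would compute the joint survival function under (D1). Monotonicity of $\Theta_i$ reduces $\{\tau_i>t_i\}$ to $\{\Theta_i(t_i)<{}_iE_1\}$, and independence of the ${}_iE_1$'s from $(\boldsymbol{\Lambda},N)$ and from each other yields
\[
\mathbf{P}[\tau_1>t_1,\ldots,\tau_n>t_n\mid \boldsymbol{\Lambda},N]=\prod_{i=1}^n e^{-\Theta_i(t_i)},
\]
with the convention $e^{-\infty}=0$. Substituting (\ref{Theta}) and swapping $\sum_{i=1}^n\sum_{j\in\mathcal{RF}_i}$ for $\sum_{j=1}^{l+m}\sum_{i\in\mathcal{RC}_j}$, the comonotonic block collapses (using $e^{-\infty}=0$ together with the monotonicity of $N_{\Lambda_j(\cdot)}$) to $\prod_{j=1}^{l}\mathbf{I}_{\{N_{\Lambda_j(\bigvee_{i\in\mathcal{RC}_j}t_i)}=0\}}$, while the POD block becomes $\exp\{-\sum_{j=l+1}^{l+m}\sum_{i\in\mathcal{RC}_j}\Lambda_j(t_i)\}$. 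Taking conditional expectation first over the Poisson processes given $\boldsymbol{\Lambda}$ contributes factors $\psi_{\Lambda_j(\bigvee_{i\in\mathcal{RC}_j}t_i)}(1)$ for $j\leq l$, and then taking expectation over $\boldsymbol{\Lambda}$ using assumption (A2) produces exactly (\ref{jointddf}). Since Theorem \ref{pro-survial probability} identifies (\ref{jointddf}) as the joint survival function of (\ref{Non-hom-defaults}), the two laws coincide.

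The main bookkeeping obstacle is the combinatorial fusion of the comonotonic indicators: each shared Poisson count $N_{\Lambda_j(\cdot)}$ with $j\leq l$ contributes an indicator to every $\Theta_i$ having $i\in\mathcal{RC}_j$, and these per-$i$ indicators must be consolidated via monotonicity into a single factor involving $\bigvee_{i\in\mathcal{RC}_j}t_i$. Once this reduction is set up correctly, the remainder of the proof is a routine conditional Laplace-transform calculation.
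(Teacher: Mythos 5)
Your proposal is correct and follows essentially the same route as the paper: the (D1)$\Leftrightarrow$(D2) step via the $U_i=\exp\{-{}_iE_1\}$ coupling is what the paper dismisses as trivial, and your computation of the joint survival function under (D1) — conditioning on $(\boldsymbol{\Lambda},N)$, swapping the per-component sums into per-factor sums, collapsing the comonotonic indicators into $\mathbf{P}[N_{\Lambda_j(\bigvee_{i\in\mathcal{RC}_j}t_i)}=0]$, and matching the result with (\ref{jointddf}) via Theorem \ref{pro-survial probability} — is exactly the paper's argument. No gaps worth flagging.
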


We note in passing that default specifications with stochastic barriers \`a la (D1) have been
discussed in, e.g.,  Lando (2004), Escobar et al. (2012), Skoglund and Chen (2015) and references
therein. Interestingly, (D1) to an extent reduces  the complexity involved
in simulating the r.v.'s $\tau_1,\ldots,\tau_n$. Indeed, note that according to
(\ref{Theta}) and concentrating on the r.f.'s with POD hitting times,
we have that in order to simulate $\tau_i,\ i=1,\ldots,n$,
it is only necessary to simulate at most $m$ sample paths
of the stochastic processes
$\Lambda_j(t),\ j\in\{l+1,\ldots,l+m\}$ as well as one exponential r.v. with unit
scale, whereas the stochastic representations in Su and Furman (2016b) requires an
$n\times m$ array  of such exponentials.

On a different note,
default specification (D2) suggests that non-negative probabilities of simultaneous default in the context of (\ref{Non-hom-defaults}) can only manifest as a result of the r.f.'s in the set $\{1,\ldots,l\}$. Also, (D2) is of special interest as the next theorem hints. Let
$S_i^{-}:[0,\ 1]\rightarrow \mathbf{\overline{R}}_{0,+}:=[0,\ \infty]$ be a function, such that
\[
S^{-}_i(u):=\inf\{
x\in\mathbf{\overline{R}}_{0,+}:\ S(x)\leq u
\},
\]
where $u\in[0,\ 1]$ and $\inf\{\emptyset\}=\infty$ by convention.
The function $S_i^{-}$ is called the generalized inverse of $S_i$, and as such it is equal to the usual inverse $S^{-1}_i(u)$ if the survival function
is continuous (Embrechts and Hofert, 2013). The proof of the following theorem is omitted, as it is a direct consequence of Sklar's theorem (Sklar, 1959) and Theorem \ref{pro-survial probability}.

\begin{theorem}
\label{MRFcopulaTh}
The copula function $C:[0,\ 1]^n\rightarrow [0,\ 1]$ that corresponds to the general MRF dependence structures is given,
 for $u_i\in[0,\ 1],\ i=1,\ldots,n$, by
\begin{eqnarray}
\label{formula-sibuya copula}
C(u_1,\ldots,u_n)=\prod_{j=1}^{l} \psi_{\Lambda_{j}(\bigvee_{i\in \mathcal{RC}_j}S^{-}_i(u_i))}(1)\prod_{j=l+1}^{l+m} \psi_{\sum_{i\in \mathcal{RC}_j} \Lambda_{j}(S^{-}_i(u_i))}(1),
\end{eqnarray}
where $S^{-}_i(u_i)$ is the (generalized) inverse of $S_i(t)=\psi_{\sum_{j\in\mathcal{RF}_i} \Lambda_j(t)}(1)$ and
$t\in\mathbf{R}_{0,+}$.
\end{theorem}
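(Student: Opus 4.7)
My plan is a direct two-step derivation from Sklar's theorem applied in survival form, using Theorem \ref{pro-survial probability} to make everything explicit, followed by a substitution via the generalized inverse of the marginal survival functions.

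First, since the integrated intensities $\Lambda_j(t)$ are assumed to be continuous and increasing with $\Lambda_j(0)=0$, each marginal survival function $S_i$ given by \eqref{uniddf} is continuous and non-increasing on $\mathbf{R}_{0,+}$, as $\psi_X$ is continuous in its argument and the composition with a continuous increasing integrand preserves continuity. Consequently, the generalized inverse $S_i^{-}$ is well defined, and on the range of $S_i$ it coincides with the ordinary inverse (cf. Embrechts and Hofert, 2013). Under this continuity, Sklar's theorem (applied in the survival form) guarantees a unique copula $C$ such that
\begin{equation*}
S(t_1,\ldots,t_n)=C\bigl(S_1(t_1),\ldots,S_n(t_n)\bigr), \qquad t_i\in\mathbf{R}_{0,+}.
\end{equation*}

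Second, I would set $u_i=S_i(t_i)$, so that $t_i=S_i^{-}(u_i)$ for $u_i$ in the range of $S_i$. Plugging this directly into the explicit expression \eqref{jointddf} derived in Theorem \ref{pro-survial probability} yields
\begin{equation*}
C(u_1,\ldots,u_n)=\prod_{j=1}^{l} \psi_{\Lambda_{j}(\bigvee_{i\in \mathcal{RC}_j}S^{-}_i(u_i))}(1)\prod_{j=l+1}^{l+m} \psi_{\sum_{i\in \mathcal{RC}_j} \Lambda_{j}(S^{-}_i(u_i))}(1),
\end{equation*}
which is the claimed formula \eqref{formula-sibuya copula}.

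The only mildly delicate point is the behaviour of $S_i^{-}$ at the boundary values $u_i\in\{0,1\}$, where one must invoke $\inf\{\emptyset\}=\infty$ and the convention $\psi_X(\infty)=0$ (for non-degenerate $X$), together with the fact that $\Lambda_j$ is increasing, to verify that $C$ is grounded and has uniform margins. These boundary verifications are routine, and no new machinery is needed beyond Theorem \ref{pro-survial probability} and Sklar's theorem, which is why I expect the proof to be essentially a one-line substitution plus a brief continuity/inverse remark.
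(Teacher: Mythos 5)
Your proposal is correct and follows exactly the route the paper itself indicates: the paper omits the proof of Theorem \ref{MRFcopulaTh} precisely because it is a direct consequence of Sklar's theorem (in survival form) combined with the explicit joint and marginal survival functions of Theorem \ref{pro-survial probability}, which is the substitution $u_i=S_i(t_i)$, $t_i=S_i^{-}(u_i)$ you carry out. Your additional remarks on continuity of $S_i$ and the boundary conventions for $S_i^{-}$ are a sensible (if routine) elaboration of the same argument, not a different approach.
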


MRF copulas (\ref{formula-sibuya copula})  are well-tailored to model dependent
default times or, more generally, dependent risks in the context of the Enterprise Risk Management (ERM).
Indeed, the MRF copulas
emerge from default time specifications (\ref{Non-hom-defaults}), (\ref{formula-alternative def1}) and
(\ref{formula-alternative def3}), cover the full range of non-negative dependence when it is measured by the Spearman rho measure of correlation (Section \ref{third-con}) and
are non-exchangeable unless the exposure matrix $c$ is such that $c_{1,j}=\cdots=c_{n,j}$ for all $j=1,\ldots,l+m$.
Furthermore, the MRF copulas reduce  to the product copula
and the Fr\'echet upper bound copula, if there are only
idiosyncratic r.f.'s $j\in\{1,\ldots,l+m:\ \sum_{i\in\mathcal{RC}_j}=1\}$
and  only
systemic r.f.'s
$j\in\{1,\ldots,l:\ \sum_{i\in\mathcal{RC}_j^l}=n\}$, respectively, included.

Notwithstanding,
the MRF copulas in their most general form are somewhat too abstract to be tackled
analytically. As it often happens, some simplifying assumptions are necessary.
For instance, it is possible to consider a class of linear stochastic processes
$\Lambda_j(t)=\Lambda_jt,\ j=1,\ldots,l+m$, only. In such a case, (\ref{formula-sibuya copula}) yields the
following class of copula functions
\begin{eqnarray}\label{cop-lin}
C(u_1,\ldots,u_n)&=&\prod_{j=1}^{l} \psi_{\Lambda_j}\left
(\bigvee_{i\in \mathcal{RC}_j}\psi^{-1}_{\sum_{j\in \mathcal{RF}_i} \Lambda_j}(u_i)\right)
\prod_{j=l+1}^{l+m}\psi_{\Lambda_{j}}\left(\sum_{i\in \mathcal{RC}_j}\psi^{-1}_{\sum_{j\in \mathcal{RF}_i} \Lambda_j}(u_i)\right)\nonumber,\\
\end{eqnarray}
where  $u_i\in [0,1]$ and
$i=1,\ldots,n$.  A simplification of (\ref{formula-sibuya copula}), the subclass of the MRF copulas  in (\ref{cop-lin}) is rich enough to
unify the well-known Archimedean and Marshall-Olkin classes of copula functions. In fact,
we have that  (\ref{cop-lin}) introduces a class of non-exchangeable
 Archimedean copulas and recovers the class of the Marshall-Olkin copulas, if
the sets $\mathcal{RC}_j$ contain at least two elements for some
$j\in\{l+1,\ldots,l+m\}$ and $j\in\{1,\ldots,l\}$, respectively. Moreover, we have that (\ref{cop-lin}) simplifies to the product copula
if the sets $\mathcal{RC}_j$ have at most one element for all $j\in\{1.\ldots,l+m\}$, and it reaches the Fr\' echet upper  bound copula
if the cardinalities of the sets $\mathcal{RC}_j,\ j\in\{1,\ldots,l\}$ coincide with the dimension of the copula whereas these sets are empty
for all other risk factors.

The following theorem establishes a characteristic representation of  MRF copulas (\ref{cop-lin}) \`a la the popular common-shock framework (e.g., Asimit et al., 2010; Su and Furman, 2016b).
\begin{theorem}
\label{pro-simulation}
For $i=1,\ldots,n$, let $V_j,\ j=1,\ldots, l$ and ${}_iV_j,\ j=l+1,\ldots,l+m$
denote a sequence of independent uniform $U[0,1]$ r.v.'s. Then the r.v. $\mathbf{U}=(U_1,\ldots,U_n)'$ has cumulative distribution function (c.d.f.) (\ref{cop-lin}) if and only if
\begin{eqnarray}
\label{stochastic representation}
U_i=\left(\bigvee_{j\in \mathcal{RF}^l_i} \psi_i \left(-\frac{\ln(V_{j})}{\Lambda_j} \right)\right)
\bigvee
\left(\bigvee_{j\in \mathcal{RF}_i^m} \psi_i \left(-\frac{\ln({}_iV_j)}{\Lambda_j} \right) \right).
\end{eqnarray}
\end{theorem}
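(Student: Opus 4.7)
The plan is to verify the forward implication directly — namely, that the r.v. $\mathbf{U}=(U_1,\ldots,U_n)'$ constructed by (\ref{stochastic representation}) has joint c.d.f. equal to (\ref{cop-lin}) — and then invoke the fact that a joint c.d.f. determines the joint law uniquely to obtain the ``only if'' direction. Throughout, abbreviate $\psi_i:=\psi_{\sum_{k\in\mathcal{RF}_i}\Lambda_k}$ so that $U_i$ has uniform $[0,1]$ marginal by virtue of (\ref{marginal-general-ddf}) (with $\Lambda_j(t)=\Lambda_j t$) and the probability integral transform.

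Fix $u_1,\ldots,u_n\in[0,1]$. Since $\psi_i$ is a (conditionally and unconditionally) strictly decreasing continuous map onto $[0,1]$, the inequality $\psi_i(x)\leq u_i$ is equivalent to $x\geq \psi_i^{-1}(u_i)$, so the event $\{U_i\leq u_i\}$ arising from (\ref{stochastic representation}) rewrites as
\begin{equation*}
\bigcap_{j\in\mathcal{RF}_i^l}\bigl\{V_j\leq e^{-\Lambda_j\psi_i^{-1}(u_i)}\bigr\}\cap\bigcap_{j\in\mathcal{RF}_i^m}\bigl\{{}_iV_j\leq e^{-\Lambda_j\psi_i^{-1}(u_i)}\bigr\}.
\end{equation*}
Intersecting over $i=1,\ldots,n$ and reindexing by r.f.\ rather than r.c., for each $j\in\{1,\ldots,l\}$ the same common-shock variable $V_j$ is shared across every $i\in\mathcal{RC}_j$, so these constraints collapse to the single inequality $V_j\leq \exp(-\Lambda_j\max_{i\in\mathcal{RC}_j}\psi_i^{-1}(u_i))$; for each $j\in\{l+1,\ldots,l+m\}$, the variables ${}_iV_j$ with $i\in\mathcal{RC}_j$ are by construction mutually independent.

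Condition on $\boldsymbol{\Lambda}$. Since all $V_j$'s and ${}_iV_j$'s are independent of $\boldsymbol{\Lambda}$ and uniform on $[0,1]$, the conditional probability of the intersection factors as
\begin{equation*}
\prod_{j=1}^{l}\exp\Bigl(-\Lambda_j\bigvee_{i\in\mathcal{RC}_j}\psi_i^{-1}(u_i)\Bigr)\cdot\prod_{j=l+1}^{l+m}\exp\Bigl(-\Lambda_j\sum_{i\in\mathcal{RC}_j}\psi_i^{-1}(u_i)\Bigr).
\end{equation*}
Taking unconditional expectations and using assumption (A2) (mutual independence of $\Lambda_1,\ldots,\Lambda_{l+m}$) to split the expectation across $j$, each resulting factor is by definition $\psi_{\Lambda_j}$ evaluated at the corresponding argument. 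The product of these Laplace transforms is exactly the right-hand side of (\ref{cop-lin}), which proves the ``if'' direction and, by uniqueness of the joint c.d.f., also the ``only if'' direction.

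The argument is essentially event-algebra bookkeeping plus a conditioning step, so no serious obstacle is anticipated; the only point that warrants care is the swap of intersections from indexation by r.c.\ to indexation by r.f., which is what separates the ``$\bigvee$'' in the systemic term from the ``$\sum$'' in the POD term and thereby produces the two different functional forms appearing in (\ref{cop-lin}). Boundary values $u_i\in\{0,1\}$, where $\psi_i^{-1}$ should be read as the generalized inverse of Theorem \ref{MRFcopulaTh}, are handled by a routine continuity/limit argument.
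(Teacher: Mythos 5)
Your proof is correct and follows essentially the same route as the paper's: rewrite the events $\{U_i\leq u_i\}$ via $\psi_i^{-1}$, reindex from risk components to risk factors, condition on $\boldsymbol{\Lambda}$ to produce the exponential factors, and take expectations using independence of the $\Lambda_j$'s to recover the product of Laplace transforms in (\ref{cop-lin}), with the converse by a uniqueness-of-distribution argument. The only cosmetic difference is that the paper phrases the ``only if'' step as uniqueness of the Laplace transform rather than uniqueness of the joint c.d.f., which amounts to the same thing here.
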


We conclude this section with some references. Namely, we note that the class of Archimedean copulas has been extensively used
in the context of credit risk in, e.g., Hull
and White (2006), Choudhry (2010), Constantinescu et al. (2011) and references
therein, as well as in the general ERM in, e.g., Frees and Valdez (1998), Kole et al. (2007), Sandstr\"{o}m (2010) and Staudt (2010). Also, the class of the Marshall-Olkin copulas has been recently
suggested for applications in credit risk in Cherubini et al. (2013), and its applications
to insurance mathematics were presented in Bowers et al. (1997).

\section{Probability of simultaneous default}
\label{sec-sing}

Survival function (\ref{jointddf}) is not absolutely continuous with respect to the Lebesgue measure on $\mathbf{R}_{0,+}^n$, and as a result,
default specifications (\ref{MRF-poisson}),  (\ref{Non-hom-defaults}), (\ref{formula-alternative def1}) and
(\ref{formula-alternative def3}) as well as copula functions (\ref{formula-sibuya copula}) and (\ref{cop-lin})
allow for non-zero probabilities of simultaneous
default. One empirical justification for accommodating this phenomenon is the famous example of $24$ railway
firms defaulting on June 21, 1970 (Azizpour and Giesecke, 2008), another justification, that is somewhat more theoretical,
is the conclusion of Das et al. (2007) that
the mixed Poisson processes approach tends to underestimate the clustering of real world
defaults.

We next formulate the probability of simultaneous default for the general MRF dependencies
discussed in the previous section.
To this end and for $2\leq k\leq n$,
we denote the
set of all risk factors that `attack' the sub-portfolio
$\{i_1,\ldots,i_k\}$ by
\begin{equation}
\label{RFall}
\mathcal{RF}_{i_1,\ldots,i_k}=\left\{j\in\{1,\ldots,l+m\}:\ c_{i_h,j}=1 \textnormal{ for at least one } i_h\in\{i_1,\ldots,i_k\}\right\},
\end{equation}
and we note that it is the union of two disjoint sets, that is of
\begin{equation}
\label{RFsing}
\mathcal{RF}_{(i_1,\ldots,i_k)}:=\{j\in\{1,\ldots,l+m\}:  c_{i_h,j}=1
\textnormal{ for all } i_h\in\{i_1,\ldots,i_k\}\}
\end{equation}
and
\begin{equation}
\label{RFnotsing}
\mathcal{RF}_{\overline{(i_1,\ldots,i_k)}}:=\mathcal{RF}_{i_1,\ldots,i_k}\setminus
\mathcal{RF}_{(i_1,\ldots,i_k)}.
\end{equation}
In addition, for $1\leq h\leq k$ and $2\leq k\leq n$, we let
\begin{equation}
\label{RFihnotsing}
\mathcal{RF}_{{i_h},\overline{(i_1,\ldots,i_k)}}=\mathcal{RF}_{i_h}\setminus \mathcal{RF}_{(i_1,\ldots,i_k)}.
\end{equation}
Last but not least, for $t\in \mathbf{R}_{0,+}$, let
\begin{equation}
\label{Aset}
\mathcal{A}(t)=\left\{
\sum_{j\in \mathcal{RF}^l_{(i_1,\ldots,i_k)}}N_{\Lambda_j(t)}>0\ \textnormal{ and }
\sum_{j\in \mathcal{RF}^l_{(i_1,\ldots,i_k)}}N_{\Lambda_j(t-)}=0
\right\}.
\end{equation}
\begin{theorem}
\label{pro-simultaneous probability}
Consider default specification (\ref{Non-hom-defaults}), and let $\{i_1,\ldots,i_k\}$ establish an index set with
$2\leq k\leq n$,  then the probability of simultaneous default is given by
\begin{eqnarray}
&&\mathbf{P}[\tau_{i_1}=\cdots=\tau_{i_k}]
=\int_{\mathbf{R}_{0,+}}\mathbf{P}[\mathcal{A}(t)]\prod_{j\in\mathcal{RF}^l_{\overline{(i_1,\ldots,i_k)}}}
\psi_{\Lambda_j(t)}(1)\prod_{j\in \mathcal{RF}^m_{i_1,\ldots,i_k}}\psi_{\Lambda_j(t) }(| \mathcal{RC}_j | ) dt,\nonumber\\
\label{thsim}
\end{eqnarray}
where $|\cdot|$ denotes set's cardinality.
\end{theorem}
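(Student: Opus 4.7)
The plan is to leverage the conditional independence structure in (A1)--(A3) to pinpoint when a simultaneous default of $i_1,\ldots,i_k$ can occur. Since the POD factors in $\mathcal{RF}^m$ possess independent (conditional on the intensities) per-component Poisson processes, and the systemic factors in $\mathcal{RF}^l_{\overline{(i_1,\ldots,i_k)}}$ reach only a proper subset of $\{i_1,\ldots,i_k\}$, neither can produce $\tau_{i_1}=\cdots=\tau_{i_k}$; the continuity of the compensators makes such collisions null events. The only remaining mechanism is for the first jump of the superposition $\sum_{j\in\mathcal{RF}^l_{(i_1,\ldots,i_k)}}N_{\Lambda_j(\cdot)}$ to occur at some time $t$, which is precisely the event $\mathcal{A}(t)$ of (\ref{Aset}).

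Working conditionally on $\boldsymbol{\Lambda}$, assumptions (A2)--(A3) render the relevant counting processes mutually independent. I would then decompose $\{\tau_{i_1}=\cdots=\tau_{i_k}\}$ as the a.s.\ disjoint union, parameterized by the simultaneous-default epoch $t\in\mathbf{R}_{0,+}$, of (i) the trigger event $\mathcal{A}(t)$; (ii) the no-early-partial-shock events $\{N_{\Lambda_j(t)}=0\}$ for each $j\in\mathcal{RF}^l_{\overline{(i_1,\ldots,i_k)}}$; and (iii) the no-early-idiosyncratic-shock events $\{{}_iN_{\Lambda_j(t)}=0\}$ for every $j\in\mathcal{RF}^m_{i_1,\ldots,i_k}$ and every relevant $i\in\mathcal{RC}_j$. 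Conditional on $\boldsymbol{\Lambda}$, each of these probabilities is a deterministic exponential in $\Lambda_j(t)$; independence of $\Lambda_j$ across $j$ from (A2) then factorizes the expectation over $\boldsymbol{\Lambda}$ into the displayed products of Laplace transforms $\psi_{\Lambda_j(t)}(1)$ and $\psi_{\Lambda_j(t)}(|\mathcal{RC}_j|)$.

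The principal obstacle is giving rigorous sense to ``$\mathbf{P}[\mathcal{A}(t)]\,dt$'', as $\mathcal{A}(t)$ is a probability-zero event at each fixed $t$. I would interpret the integral via the distribution of the first-jump time $T^{\ast}$ of the superposed common-shock process, whose conditional density (given $\boldsymbol{\Lambda}$) equals $\Lambda^{\ast\prime}(t)\exp\{-\Lambda^{\ast}(t)\}$ with $\Lambda^{\ast}(t):=\sum_{j\in\mathcal{RF}^l_{(i_1,\ldots,i_k)}}\Lambda_j(t)$; the continuity and monotonicity of the $\Lambda_j(\cdot)$ imposed in Theorem \ref{pro-survial probability} guarantee absolute continuity and thus legitimize the disintegration. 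Combining this density with the factorized survival probabilities from (ii) and (iii), applying Fubini to interchange the expectation over $\boldsymbol{\Lambda}$ with the $dt$-integral, and using $\mathbf{E}[e^{-k\Lambda_j(t)}]=\psi_{\Lambda_j(t)}(k)$ then yields (\ref{thsim}).
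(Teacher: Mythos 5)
Your argument is essentially the paper's own proof: you condition on $\boldsymbol{\Lambda}$, observe that simultaneity can only be triggered by the common systemic factors so the event disintegrates over the first-jump epoch of $\sum_{j\in\mathcal{RF}^l_{(i_1,\ldots,i_k)}}N_{\Lambda_j(\cdot)}$ (the event $\mathcal{A}(t)$), factorize the remaining no-early-shock requirements by conditional independence into $\psi_{\Lambda_j(t)}(1)$ and $\psi_{\Lambda_j(t)}(|\mathcal{RC}_j|)$, and finish with Fubini, exactly as in the appendix. The only small caveat is your claim that continuity and monotonicity of the $\Lambda_j(\cdot)$ alone guarantee absolute continuity of the first-jump-time law (they do not in general), but this interpretive point is treated no more rigorously in the paper itself, so the proposal stands as a faithful, if anything slightly more explicit, version of the published argument.
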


Under an additional assumption of linearity of the integrated intensity, the
probability of simultaneous default can be simplified as follows.

\begin{corollary}
\label{lemma-simultaneous default}
Let $\Lambda_j(t)=\Lambda_j t$ for all $j=1,\ldots,l+m$ and $t\in\mathbf{R}_{0,+}$, and leave the rest of the set-up in Theorem
\ref{pro-simultaneous probability} unchanged, then
the probability of simultaneous default of the sub-portfolio $\{i_1,\ldots,i_k\},\ 2\leq k\leq n$ is given by
\begin{eqnarray}\label{corsim}
\mathbf{P}[\tau_{i_1}=\cdots=\tau_{i_k}]=\mathbf{E}\left[
\frac{\Lambda^l_{(i_1,\ldots,i_k)}}{\Lambda^l_{i_1,\ldots,i_k}+
\tilde{\Lambda}^m_{i_1,\ldots,i_k}}
\right],
\end{eqnarray}
where $\Lambda^l_{(i_1,\ldots,i_k)}= \sum_{j\in\mathcal{RF}^l_{(i_1,\ldots,i_k)}}\Lambda_j$, $\Lambda^l_{i_1,\ldots,i_k}= \sum_{j\in\mathcal{RF}^l_{i_1,\ldots,i_k}}\Lambda_j$, $\tilde{\Lambda}^m_{i_1,\ldots,i_k}=\sum_{j\in\mathcal{RF}^m_{i_1,\ldots,i_k}}\Lambda_j  | \mathcal{RC}_j|$
and $|\cdot|$ stands for set's cardinality.
\end{corollary}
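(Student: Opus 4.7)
The plan is to specialize Theorem \ref{pro-simultaneous probability} to the case of linear integrated intensities by conditioning throughout on the random intensity vector $\boldsymbol{\Lambda}=(\Lambda_1,\ldots,\Lambda_{l+m})'$ and then removing the conditioning by taking $\mathbf{E}[\,\cdot\,]$ at the very end. Once we condition on $\boldsymbol{\Lambda}=\boldsymbol{\lambda}$, the mixed Poisson processes $N_{\Lambda_j(t)}$ and ${}_iN_{\Lambda_j(t)}$ reduce to ordinary homogeneous Poisson processes with constant rates $\lambda_j$, so the right-hand side of (\ref{thsim}) reduces to a single exponential integral.

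The three ingredients I need are the following. First, the factor $\mathbf{P}[\mathcal{A}(t)\mid \boldsymbol{\Lambda}]\,dt$ is, by definition, the probability that the superposition $\sum_{j\in\mathcal{RF}^l_{(i_1,\ldots,i_k)}}N_{\Lambda_j t}$ has its first jump in $[t,t+dt)$; conditional on $\boldsymbol{\Lambda}$ this superposition is a homogeneous Poisson process of rate $\Lambda^l_{(i_1,\ldots,i_k)}=\sum_{j\in\mathcal{RF}^l_{(i_1,\ldots,i_k)}}\Lambda_j$, so
\[
\mathbf{P}[\mathcal{A}(t)\mid\boldsymbol{\Lambda}]\,dt=\Lambda^l_{(i_1,\ldots,i_k)}\exp\bigl(-\Lambda^l_{(i_1,\ldots,i_k)}\,t\bigr)\,dt.
\]
Second, under linearity $\psi_{\Lambda_j(t)}(1)=\mathbf{E}[e^{-\Lambda_j t}]$, and the conditional version equals $e^{-\Lambda_j t}$ for every systemic r.f.\ $j\in\mathcal{RF}^l_{\overline{(i_1,\ldots,i_k)}}$. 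Third, and analogously, $\psi_{\Lambda_j(t)}(|\mathcal{RC}_j|)$ becomes $e^{-\Lambda_j t|\mathcal{RC}_j|}$ conditional on $\boldsymbol{\Lambda}$ for every POD r.f.\ $j\in\mathcal{RF}^m_{i_1,\ldots,i_k}$.

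Substituting these three ingredients into (\ref{thsim}) and pooling the exponents, the total rate multiplying $-t$ becomes
\[
\Lambda^l_{(i_1,\ldots,i_k)}+\sum_{j\in\mathcal{RF}^l_{\overline{(i_1,\ldots,i_k)}}}\Lambda_j+\sum_{j\in\mathcal{RF}^m_{i_1,\ldots,i_k}}\Lambda_j|\mathcal{RC}_j|=\Lambda^l_{i_1,\ldots,i_k}+\tilde{\Lambda}^m_{i_1,\ldots,i_k},
\]
where I have used the disjoint decomposition $\mathcal{RF}^l_{i_1,\ldots,i_k}=\mathcal{RF}^l_{(i_1,\ldots,i_k)}\cup\mathcal{RF}^l_{\overline{(i_1,\ldots,i_k)}}$ from (\ref{RFsing})--(\ref{RFnotsing}). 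A one-line exponential integral over $t\in\mathbf{R}_{0,+}$ then yields the conditional probability $\Lambda^l_{(i_1,\ldots,i_k)}/(\Lambda^l_{i_1,\ldots,i_k}+\tilde{\Lambda}^m_{i_1,\ldots,i_k})$, and taking the unconditional expectation via Fubini (the integrand is non-negative, so this is automatic) gives (\ref{corsim}).

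The only real subtlety is the correct bookkeeping of the index sets, namely checking that $\mathcal{RF}^l_{(i_1,\ldots,i_k)}$ and $\mathcal{RF}^l_{\overline{(i_1,\ldots,i_k)}}$ partition $\mathcal{RF}^l_{i_1,\ldots,i_k}$ so that the systemic part of the exponent collapses cleanly to $\Lambda^l_{i_1,\ldots,i_k}\,t$; beyond that the argument is a routine integration of an exponential after conditioning, so I do not anticipate any deeper obstacle.
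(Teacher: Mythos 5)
Your proposal is correct and follows essentially the same route as the paper: the paper likewise writes $\mathbf{P}[\mathcal{A}(t)]$ as $\mathbf{E}\bigl[\Lambda^l_{(i_1,\ldots,i_k)}\exp\{-\Lambda^l_{(i_1,\ldots,i_k)}t\}\bigr]$, expresses the remaining Laplace-transform factors as integrals against the (independent, by (A2)) laws of the $\Lambda_j$'s over the disjoint index sets, interchanges the order of integration, and performs the elementary exponential integral in $t$ to obtain the stated expectation of the ratio. Your conditioning-on-$\boldsymbol{\Lambda}$ phrasing is just the same computation read in the mixture direction, with the same appeal to Fubini and the same index-set bookkeeping.
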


Unless very special cases are of interest, e.g., when the set of the r.f.'s having fully-comonotonic hitting times is empty
($\mathcal{RF}^l_{(i_1,\ldots,i_k)}=\emptyset$)
or when the set of the r.f.'s having POD hitting times is empty ($\mathcal{RF}^m_{(i_1,\ldots,i_k)}=\emptyset$) and in addition
the r.v.'s $\Lambda_j,\ j\in\mathcal{RF}_{i_1,\ldots,i_k}$ follow favourable probability distributions (Section \ref{third-con}),
even expression (\ref{corsim}) is somewhat involved to be handled analytically. However, it is worth noting that 
Corollary \ref{lemma-simultaneous default} is quite convenient to compute the probability of simultaneous default employing Monte-Carlo simulations.

In summary, we have hitherto derived a number of important results in the context of the general form of
the MRF dependencies as well as in the special case when the integrated intensities
are linear. However, in order to obtain insights into such higher level characteristics of the
new dependence structures as, e.g., measures of rank correlation and indices of tail dependence, further assumptions are required. In the following, we assume that the r.v.'s
$\Lambda_j$ are stochastically independent and distributed gamma with shape parameters
$\xi_j(\in\mathbf{R}_{+})$ and unit scales, $j=1,\ldots,l+m$.
The above choice of the distribution of $\Lambda_j$ may seem ad
hoc at the first glance, but it is well motivated by the CreditRisk$^+$ approach, which serves
as one of the most popular ways to model default risk in the modern practice of credit risk.
The assumption gives birth to the Clayton subclass of the MRF  dependencies.

\section{Clayton multiple risk factor copula functions}
\label{third-con}
Let $\Lambda_j\sim Ga(\xi_j,\ 1),\ j=1,\ldots,l+m$ denote $(l+m)$ stochastically
 independent r.v.'s distributed gamma. Then the probability density function of $\Lambda_j$ is
 \begin{eqnarray}
\label{gamma-law}
f_{\Lambda_j}(\lambda;\xi_j,1)=e^{-\lambda}\frac{\lambda^{\xi_j-1}}
{\Gamma(\xi_j)},\ \lambda\in\mathbf{R}_{+},
\end{eqnarray}
and the corresponding Laplace transform is
\begin{equation}\label{LTg}
\psi_{\Lambda_j}(x)=\left(
1+x
\right)^{-\xi_j},\ x\in\mathbf{R}_{0,+}.
\end{equation}
The latter observation immediately establishes that gamma distributions are infinitely divisible and so
closed under convolutions, i.e., in our case, we have that $\Lambda:=\Lambda_1+\cdots
+\Lambda_n$ is distributed $Ga(\xi,1)$, where $\xi=\xi_1+\cdots+\xi_n$. Importantly,
even if the scale parameters are not equal, the distribution of the convolution is still a gamma but with
a random shape parameter. This is formulated in the following lemma.

Let $K$ be an integer valued non-negative r.v. with the probability mass function (p.m.f.) $p_k:=\mathbf{P}[K=k]$, which is given by
\begin{equation}
\label{pk}
p_k=c_{+} \delta_k,\ k=0,\ 1, \ldots,
\end{equation}
where, for $\sigma_i\in\mathbf{R}_{+}$ and $\sigma_+=\vee_{i=1}^n \sigma_i$,
\[
c_{+}=\prod_{i=1}^n \left(\frac{\sigma_i}{\sigma_{+}}\right)^{\xi_i}
\]
and
\[
\delta_k=\left\{
\begin{array}{ll}
1, & k=0 \\
k^{-1}\sum_{l=1}^k\sum_{i=1}^n \xi_i \left(
1-\frac{\sigma_i}{\sigma_{+}}
\right)^l\delta_{k-l}, & k=1,\ 2,\ldots
\end{array}
\right..
\]
\begin{lemma}[Moschopoulos, 1985, also, H\"urlimann, 2001 and Furman and Landsman, 2005]
\label{FL2005}
Let $\Lambda_i\sim Ga(\xi_i(\in\mathbf{R}_{+}),\ \sigma_i(\in\mathbf{R}_{+})),\ i=1,\ldots,n$ denote gamma distributed and independent
stochastically r.v.'s with arbitrary shape and scale parameters, and let $\Lambda=\Lambda_1+\cdots+\Lambda_n$ be their convolution. Then
$\Lambda\sim Ga(\xi+K,\ \sigma_+)$, where $\xi=\xi_1+\cdots+\xi_n$, $\sigma_+=\vee_{i=1}^n \sigma_i$ and $K$ is
an integer valued non-negative r.v. with p.m.f. (\ref{pk}).
\end{lemma}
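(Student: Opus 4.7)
My plan is to establish the claim at the level of Laplace transforms and then recognize the mixture structure by inspecting an associated probability generating function. By the independence assumption and the Laplace transform of $Ga(\xi_i,\sigma_i)$, I would first write $\psi_\Lambda(x)=\prod_{i=1}^n(1+x/\sigma_i)^{-\xi_i}$. I would then factor out the ``dominant'' piece $(1+x/\sigma_+)^{-\xi}$ and change variables to $z:=(1+x/\sigma_+)^{-1}\in(0,1]$, so that the identity $(1+x/\sigma_+)/(1+x/\sigma_i)=(\sigma_i/\sigma_+)/\bigl(1-(1-\sigma_i/\sigma_+)z\bigr)$ (a brief direct calculation) delivers the factorization
\[
\psi_\Lambda(x)=\left(1+\frac{x}{\sigma_+}\right)^{-\xi}\, G(z), \qquad G(z):=c_+\prod_{i=1}^n\bigl(1-(1-\sigma_i/\sigma_+)z\bigr)^{-\xi_i}.
\]

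The key observation is that if $G$ turns out to be the PGF of a non-negative integer-valued r.v.\ $K$ with p.m.f.\ (\ref{pk}), the argument is complete: the right-hand side is then $\mathbf{E}\bigl[(1+x/\sigma_+)^{-(\xi+K)}\bigr]$, and uniqueness of Laplace transforms on $\mathbf{R}_{0,+}$ identifies $\Lambda$ with the announced mixture-of-gammas distribution.

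To verify the PGF claim, I would first note that $0\leq 1-\sigma_i/\sigma_+<1$, so each factor in $G$ admits a generalized binomial expansion with non-negative Taylor coefficients; these non-negativities carry over to $G$, and $G(1)=c_+\prod_i(\sigma_i/\sigma_+)^{-\xi_i}=1$ by the definition of $c_+$. Hence $G$ is a legitimate PGF. To identify its coefficients with $c_+\delta_k$, I would set $H:=G/c_+$, take logarithms and differentiate to obtain
\[
\frac{H'(z)}{H(z)}=\sum_{l\geq 1}\left(\sum_{i=1}^n\xi_i(1-\sigma_i/\sigma_+)^l\right) z^{l-1},
\]
then equate the coefficient of $z^{k-1}$ in $H'(z)=H(z)\cdot(H'/H)(z)$ to read off precisely the recursion $k\,\delta_k=\sum_{l=1}^k\sum_{i=1}^n\xi_i(1-\sigma_i/\sigma_+)^l\,\delta_{k-l}$ with $\delta_0=1$ displayed in the statement.

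The main bookkeeping obstacle is the algebraic regrouping that isolates $G$ from the ratio of Laplace transforms; once the substitution $z=(1+x/\sigma_+)^{-1}$ is in place and the factors are correctly reorganized, the remainder is routine formal-power-series manipulation.
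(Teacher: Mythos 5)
The paper does not actually prove Lemma \ref{FL2005}; it is quoted from the literature (Moschopoulos, 1985; H\"urlimann, 2001; Furman and Landsman, 2005), so there is no in-text proof to compare against. Your argument is correct and is essentially the classical derivation: factor the product of gamma Laplace transforms as $(1+x/\sigma_+)^{-\xi}G(z)$ with $z=(1+x/\sigma_+)^{-1}$, check that $G(z)=c_+\prod_{i=1}^n\bigl(1-(1-\sigma_i/\sigma_+)z\bigr)^{-\xi_i}$ is a probability generating function, identify its coefficients with $p_k=c_+\delta_k$ of (\ref{pk}) via the logarithmic-derivative recursion, and conclude by uniqueness of the Laplace transform that $\psi_\Lambda(x)=\mathbf{E}\bigl[(1+x/\sigma_+)^{-(\xi+K)}\bigr]$, i.e.\ $\Lambda\sim Ga(\xi+K,\sigma_+)$ as a gamma mixture; your change-of-variables identity and the recursion $k\delta_k=\sum_{l=1}^k\sum_{i=1}^n\xi_i(1-\sigma_i/\sigma_+)^l\delta_{k-l}$ both check out, and $G(1)=1$ follows from the definition of $c_+$. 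One point worth making explicit: the argument hinges on the paper's parametrization in which the second gamma parameter enters the Laplace transform as $\psi_{\Lambda_i}(x)=(1+x/\sigma_i)^{-\xi_i}$ (consistent with (\ref{LTg}) for $\sigma_i=1$); it is precisely this convention that makes $\sigma_+=\vee_{i=1}^n\sigma_i$ the correct mixing scale, since then $1-\sigma_i/\sigma_+\in[0,1)$, the binomial expansions have non-negative coefficients, and $z=(1+x/\sigma_+)^{-1}\in(0,1]$ lies inside the region of convergence — with the opposite (mean-scale) convention one would have to take the minimum instead. Your packaging of the correction factor as the PGF of $K$ is a slightly cleaner way of stating what Moschopoulos obtains by expanding the moment generating function and densities directly, but it is the same computation and yields the identical recursion.
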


We note in passing that if $\sigma_1=\cdots=\sigma_n$, then $K=0$ almost surely, and the findings of the lemma reduce to the
simple convolution of gamma distributed r.v.'s with equal scale parameters.

We further introduce the Clayton subclass of the MRF dependencies. The definition
below follows from
(\ref{cop-lin}) and (\ref{LTg}). We remind that the doubly stochastic Poisson approach
with gamma distributed intensities has been adapted in CreditRisk$^+$,
and, as such, it is arguably one of the most popular ways
to model dependent defaults in nowadays credit risk practice. The method has been often
criticized for underestimating the clustering of defaults' occurrences (Das et al., 2007).  We note that the Clayton MRF dependencies augment
the POD hitting times of
the r.f.'s in CreditRisk$^+$ with the fully comonotonic hitting times of the so called
systemic r.f.'s. This allows for a mechanism to model the clustering of defaults
more accurately and may thus resolve to an extent the aforementioned drawback of the CreditRisk$^+$ method.

\begin{definition}
\label{GCS-def}
Copula functions $C_{\boldsymbol{\xi}}:[0,1]^n\rightarrow [0,1]$, parametrized by the deterministic vector
$\boldsymbol{\xi}=(\xi_1,\ldots,\xi_{l+m})'$ with $\xi_j\in\mathbf{R}_{+},\ j=1,\ldots,l+m$,
are called the Clayton MRF copulas if
\begin{eqnarray}
\label{ddf-copula}
C_{\boldsymbol{\xi}}(u_1,\ldots,u_n)
&=&\prod_{j=1}^{l}\bigwedge_{i\in \mathcal{RC}_j}u_i^{\frac{\xi_j}{\xi_{c,i}}}\prod_{j=l+1}^{l+m}   \left[1+\sum_{i\in \mathcal{RC}_j}\left(u_i^{-\frac{1}{\xi_{c,i}}} -1\right) \right]^{-\xi_j},
\end{eqnarray}
where $u_i\in[0, 1]$ and  $\xi_{c,i}=\sum_{j\in \mathcal{RF}_i} \xi_j$ for $i=1,\ldots,n$.
\end{definition}

In order to state our next results, we break the vector parameter $\boldsymbol{\xi}=(\xi_1,\ldots,\xi_{l+m})'$ as following $\boldsymbol{\xi}:=(\boldsymbol{\alpha}',\boldsymbol{\gamma}')'$ where $\boldsymbol{\alpha}:=(\alpha_1,\ldots,\alpha_l)'$ and $\boldsymbol{\gamma}:=(\gamma_{l+1},\ldots,\gamma_{l+m})'$. Then, with the help of the set notations introduced earlier, we can have general sums of the form $\square_{\bullet}=\sum_{j\in\mathcal{RF}_{\bullet}} \square_j$, where `$\square$' can be a parameter, e.g., $\alpha,\ \gamma,\ \xi$ or a r.v., e.g., $\Lambda$, and `$\bullet$' is any one of $i_1,\ldots,i_k$, $(i_1,\ldots,i_k)$, $(\overline{i_1,\ldots,i_k})$ and $i_h,(\overline{i_1,\ldots,i_k)}$.

It is easy to see that, for a fixed dimension and $(u_1,\ldots,u_n)'\in[0,\ 1]^n$,
\begin{itemize}
\item if the sets $\mathcal{RC}_j$ contain at most one element for all $j\in\{1,\ldots,l+m\}$,
then 
\[
C_{\boldsymbol{\xi}}(u_1,\ldots,u_n)=\prod_{i=1}^n u_i=:C^\perp(u_1,\ldots,u_n)
\textnormal{ - the product copula};
\]
\item if $|\mathcal{RC}_j|=n$ for some $j\in\{1,\ldots,l\}$ and are zero otherwise, then
\[
C_{\boldsymbol{\xi}}(u_1,\ldots,u_n)=\bigwedge_{i=1}^n u_i=:M(u_1,\ldots,u_n)
\textnormal{ - the Fr\'echet upper bound  copula};
\]
\item if $|\mathcal{RC}_j|=n$ for some $j\in\{l+1,\ldots,m+l\}$ and are zero otherwise, then
\begin{eqnarray*}
&&C_{\boldsymbol{\xi}}(u_1,\ldots,u_n)=\left[1+\sum_{i=1}^n\left(u_i^{-\frac{1}{\gamma_{c,(1,\ldots,n)}}} -1\right) \right]^{-\gamma_{c,(1,\ldots,n)}}=:C_{\boldsymbol{\gamma}}(u_1,\ldots,u_n)
\\
&& \textnormal{- the Clayton Archimedean  copula};
\end{eqnarray*}
\item if the sets $\mathcal{RC}_j$ have at least one element for $j\in\{1,\ldots,l\}$
and are empty sets otherwise,
then
\[
C_{\boldsymbol{\xi}}(u_1,\ldots,u_n)=\prod_{j=1}^{l}\bigwedge_{i\in \mathcal{RC}_j}u_i^{\frac{\alpha_j}{\alpha_{c,i}}}
=:C_{\boldsymbol{\alpha}}(u_1,\ldots,u_n)
\textnormal{ - the Marshall-Olkin copula}.
\]
\end{itemize}

The closure under convolutions property of the r.v.'s distributed gamma, and more generally Lemma
\ref{FL2005}, facilitate yet additional simplification of (\ref{thsim}) in the context of the Clayton MRF copulas. 
A very special case of Theorem \ref{pro-simultaneous probability} and Corollary \ref{lemma-simultaneous default}, the following
proposition establishes at a stroke two stand alone results obtained independently in Marshall and Olkin (1967) and Asimit et al. (2010).
\begin{proposition}
\label{CGS simultaneous default}
Within the Clayton subclass of the MRF copulas and with $\{i_1,\ldots,i_k\},\ 2\leq k\leq n$ establishing an index set, the probability of
simultaneous default is given by
\begin{eqnarray}
\mathbf{P}\left[U_{i_1}^{1/\xi_{c,i_1}}=\cdots=U_{i_k}^{1/\xi_{c,i_k}}\right]=\alpha_{(i_1,\ldots,i_k)}\mathbf{E}\left[\frac{1}{\xi_{c,i_1,\ldots,i_k} + K}\right],
\end{eqnarray}
where $\alpha_{c,(i_1,\ldots,i_k)}=\sum_{j\in \mathcal{RF}^l_{(i_1,\ldots,i_k)}} \xi_j$, $\xi_{c,i_1,\ldots,i_k}=\sum_{j\in \mathcal{RF}_{i_1,\ldots,i_k}} \xi_j$, and $K$ is an integer-valued r.v. having p.m.f. \`a la (\ref{pk}).
\end{proposition}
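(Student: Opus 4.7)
The plan is to invoke Corollary \ref{lemma-simultaneous default} and then collapse the resulting expectation to the Moschopoulos-type mixture of Lemma \ref{FL2005} by way of a simple integral representation. Under the Clayton specification, the marginal law is $S_i(t)=(1+t)^{-\xi_{c,i}}$, so the probability integral transform $U_i=S_i(\tau_i)$ converts the event $\{U_{i_1}^{1/\xi_{c,i_1}}=\cdots=U_{i_k}^{1/\xi_{c,i_k}}\}$ into $\{\tau_{i_1}=\cdots=\tau_{i_k}\}$. Corollary \ref{lemma-simultaneous default} then writes this probability as $\mathbf{E}[A/(A+D)]$, where I set
\[
A:=\Lambda^l_{(i_1,\ldots,i_k)}\sim Ga(\alpha_{(i_1,\ldots,i_k)},1)\quad\text{and}\quad D:=\Lambda^l_{\overline{(i_1,\ldots,i_k)}}+\tilde{\Lambda}^m_{i_1,\ldots,i_k}.
\]
Crucially, $A$ and $D$ are independent because they are built from disjoint collections of $\Lambda_j$'s.

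Next I exploit the identity $1/(A+D)=\int_0^\infty e^{-(A+D)t}\,dt$, and Tonelli together with independence give
\[
\mathbf{E}\!\left[\frac{A}{A+D}\right]=\int_0^\infty \mathbf{E}[Ae^{-At}]\,\psi_D(t)\,dt.
\]
Differentiating (\ref{LTg}) yields $\mathbf{E}[Ae^{-At}]=\alpha_{(i_1,\ldots,i_k)}(1+t)^{-\alpha_{(i_1,\ldots,i_k)}-1}$, and by independence plus the Laplace transform of a scaled gamma,
\[
\psi_D(t)=(1+t)^{-\alpha_{\overline{(i_1,\ldots,i_k)}}}\prod_{j\in\mathcal{RF}^m_{i_1,\ldots,i_k}}(1+|\mathcal{RC}_j|t)^{-\gamma_j}.
\]
Collecting the $(1+t)$-powers and identifying what remains with the Laplace transform $\Xi$ of the entire denominator $A+D$, the integrand reduces to $\alpha_{(i_1,\ldots,i_k)}(1+t)^{-1}\Xi(t)$.

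Now Lemma \ref{FL2005} applied to $A+D$---a convolution of independent gammas with shapes $\{\alpha_j\}_{j\in\mathcal{RF}^l_{i_1,\ldots,i_k}}\cup\{\gamma_j\}_{j\in\mathcal{RF}^m_{i_1,\ldots,i_k}}$ and rates $1$ or $1/|\mathcal{RC}_j|$, whose maximum is $\sigma_+=1$ in the non-trivial case $\mathcal{RF}^l_{(i_1,\ldots,i_k)}\neq\emptyset$---gives $\Xi(t)=\mathbf{E}[(1+t)^{-(\xi_{c,i_1,\ldots,i_k}+K)}]$ with $K$ distributed as the p.m.f.~(\ref{pk}) built from those shapes and rates. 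A final Tonelli swap and the elementary evaluation $\int_0^\infty(1+t)^{-(a+1)}\,dt=1/a$ deliver
\[
\alpha_{(i_1,\ldots,i_k)}\int_0^\infty(1+t)^{-1}\Xi(t)\,dt=\alpha_{(i_1,\ldots,i_k)}\,\mathbf{E}\!\left[\int_0^\infty(1+t)^{-(\xi_{c,i_1,\ldots,i_k}+K+1)}\,dt\right]=\alpha_{(i_1,\ldots,i_k)}\,\mathbf{E}\!\left[\frac{1}{\xi_{c,i_1,\ldots,i_k}+K}\right],
\]
which is the claimed identity.

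The only step requiring genuine care is the bookkeeping around Lemma \ref{FL2005}: one must check that the ``max rate'' $\sigma_+$ attached to $A+D$ equals $1$, so that the Moschopoulos mixture representing $\Xi$ is actually in powers of $(1+t)$ rather than in powers of $(1+\sigma_+t)$ for some $\sigma_+\neq 1$. This is forced by the presence of at least one rate-$1$ component coming from $\Lambda^l_{i_1,\ldots,i_k}$ (which is non-empty once $\mathcal{RF}^l_{(i_1,\ldots,i_k)}\neq\emptyset$); the degenerate case $\mathcal{RF}^l_{(i_1,\ldots,i_k)}=\emptyset$ makes both sides vanish since $\alpha_{(i_1,\ldots,i_k)}=0$ and no simultaneous default of the sub-portfolio is possible.
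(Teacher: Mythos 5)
Your argument is correct and follows essentially the paper's own route: both start from Corollary \ref{lemma-simultaneous default}, split the ratio into the independent pieces $A=\Lambda^l_{(i_1,\ldots,i_k)}$ and the remainder of the denominator, and invoke Lemma \ref{FL2005} to represent the convolution as a gamma with randomized shape $\xi_{c,i_1,\ldots,i_k}+K$; the only difference is that the paper evaluates $\mathbf{E}[A/(A+D)]$ by conditioning on $K$ and quoting the beta mean of a ratio of independent equal-scale gammas, whereas you rederive the same value through $1/(A+D)=\int_0^\infty e^{-(A+D)t}\,dt$ and Laplace transforms. Your bookkeeping that the mixture must be taken in powers of $(1+t)$ (base corresponding to the unit-rate components, all other components having rate $1/|\mathcal{RC}_j|\le 1$) is the right reading of Lemma \ref{FL2005} and of the p.m.f.\ (\ref{pk}), and is in fact spelled out more carefully than in the paper's proof.
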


The rest of this section is devoted to deriving the Spearman rho measure of rank correlation
in the context of the Clayton subclass of the MRF dependencies. It is well-known that
the Pearson measure of correlation can produce somewhat counter-intuitive results
when the dependence is not linear, i.e., beyond the class of multivariate elliptical
distributions (see, e.g., Fang et al., 1990). The Spearman rho, succinctly $\rho_S$, provides a
natural extension for arbitrary dependencies.

\begin{definition}[Nelsen, 2006]
Let the r.v.'s $U$ and $V$ have a copula $C$.  Then the Spearman rho measure of
rank correlation is given by
\[
\rho_S(C)=12\int \int_{[0,\ 1]^2}uvdC(u,v)-3.
\]
\end{definition}

A number of notes are instrumental before formulating the expression for
$\rho_S$ in the context of the Clayton MRF dependencies. First, we are interested in the
bivariate copula functions only and thus (\ref{ddf-copula}) reduces to
\begin{eqnarray}
\label{bivariate-ddf}
C_{\boldsymbol{\xi}}(u_i,u_k)
&=&u_i^{\xi_{c,i,\overline{(i,k)}}\over\xi_{c,i}}u_k^{\xi_{c,k,\overline{(i,k)}}\over\xi_{c,k}}
\left(
u_i^{\alpha_{c,(i,k)}\over \xi_{c,i}}\bigwedge
u_k^{\alpha_{c,(i,k)}\over \xi_{c,k}}
\right)
\left(u_i^{-\frac{1}{\xi_{c,i}}}+u_k^{-\frac{1}
{\xi_{c,k}}}-1 \right)^{-\gamma_{c,(i,k)}},
\end{eqnarray}
where $u_i$ and $u_k$ are in $[0,\ 1]$ for $1\leq i\neq k\leq n$.

Second, we recall that the $(q+1)\times q$ hypergeometric function
(Gradshteyn and Ryzhik, 2014) is formulated as
\begin{eqnarray}
\label{hyperpq}
_{q+1}F_q(a_1,\ldots,a_{q+1};b_1,\ldots,b_q;z):=\sum_{k=0}^{\infty}\frac{(a_1)_k,\ldots,(a_{q+1})_k }{(b_1)_k,\ldots,(b_q)_k}\times\frac{z^k}{k!},\
\end{eqnarray}
where $(p)_n:=p(p+1)\cdots(p+n-1)$ for $n\in \mathbf{Z}_+$, $(p)_0:=1$ and $q\in\mathbf{Z}_+$.
For $a_1,\ldots,a_{q+1}$ all positive, and these are the cases of interest in the present paper,
the radius of convergence of the series is the open disk $|z|<1$. On the boundary $|z|=1$, the series
converges absolutely if $d=b_1+\cdots + b_q-a_1-\cdots -a_{q+1}>0$, and it
converges except at $z=1$ if $0\geq d>-1$.

Let
\[
h(x)={}_3F_2(2x,1,\gamma_{c,(i,k)};2x+1,2\xi_{c,i}+2\xi_{c,k}-\xi_{c,(i,k)}+1;-1),
\]
where $x\in\mathbf{R}_{0,+}$, and such that $h(x)$ is well defined.
\begin{theorem}
\label{sp-rho}
Consider the Clayton subclass of the MRF copulas, then the
Spearman measure of rank correlation is, for $1\leq i\neq k\leq n$, given by
\begin{eqnarray}
\label{sp-rho-ClaytonMRF-formula}
\rho_S(C_{\boldsymbol{\xi}})=\frac{6}{2\xi_{c,i}+2\xi_{c,k}-\xi_{c,(i,k)}}(\xi_{c,k}h(\xi_{c,i})+\xi_{c,i}h(\xi_{c,k}))-3.
\end{eqnarray}
\end{theorem}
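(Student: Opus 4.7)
The plan is to use the integrated form
$$\rho_S(C) = 12\iint_{[0,1]^2} C(u,v)\,du\,dv - 3$$
of Spearman's rho, which is available for any bivariate copula and avoids any difficulty from the singular component of $C_{\boldsymbol{\xi}}$ since we integrate the copula itself, not its density. The task reduces to evaluating $I := \iint C_{\boldsymbol{\xi}}(u_i,u_k)\,du_i\,du_k$ for the bivariate expression (\ref{bivariate-ddf}).

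I would abbreviate $p := \xi_{c,i}$, $q := \xi_{c,k}$, $a := \alpha_{c,(i,k)}$, $g := \gamma_{c,(i,k)}$, and $A := 2p + 2q - (a+g)$, noting $\xi_{c,(i,k)} = a+g$ and hence $\xi_{c,i,\overline{(i,k)}} = p-a-g$ (symmetrically for $k$). The change of variables $x = u_i^{1/p}$, $y = u_k^{1/q}$, together with the identity $u_i^{-1/p}+u_k^{-1/q}-1 = (x+y-xy)/(xy)$, turns the double integral into
$$I = pq\int_0^1\int_0^1 x^{2p-a-1}\, y^{2q-a-1}\,(x\wedge y)^{a}\,(x+y-xy)^{-g}\,dx\,dy.$$
The singular factor $(x\wedge y)^a$ naturally splits the unit square into $\{x<y\}$ and $\{x>y\}$, yielding $I = I_1 + I_2$; by the symmetry $(x,p)\leftrightarrow(y,q)$ it suffices to evaluate $I_1$.

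On $\{x<y\}$ I would substitute $x = ty$ with $t\in[0,1]$, which decouples the variables up to the factor $(1+t-ty)^{-g}$ and produces
$$I_1 = pq\int_0^1 t^{2p-1}\left[\int_0^1 y^{A-1}(1+t-ty)^{-g}\,dy\right]dt.$$
A further substitution $w = 1-y$ in the inner integral brings the bracket to Euler's integral representation of the Gauss hypergeometric function, giving $A^{-1}\,{}_2F_1(g,1;A+1;-t)$. Expanding this series in $t$ and integrating $t^{2p+k-1}$ term by term on $[0,1]$ yields
$$I_1 = \frac{pq}{A}\sum_{k\geq 0}\frac{(g)_k(-1)^k}{(A+1)_k(2p+k)}.$$
The identities $(2p)_k/(2p+1)_k = 2p/(2p+k)$ and $(1)_k = k!$ identify the sum as $h(p)/(2p)$, so $I_1 = q\,h(p)/(2A)$. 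The symmetric computation gives $I_2 = p\,h(q)/(2A)$, and substituting $I = I_1+I_2$ into $12I - 3$ delivers (\ref{sp-rho-ClaytonMRF-formula}).

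The main obstacle is to recognize that the apparently awkward $y$-integral collapses to a Gauss hypergeometric via Euler's formula and that, after termwise integration in $t$, the resulting single infinite series coincides exactly with the ${}_3F_2$ defining $h$; the singular component of the copula requires no special treatment beyond splitting the unit square into two symmetric pieces.
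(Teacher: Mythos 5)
Your proposal is correct, and its overall strategy coincides with the paper's: both start from $\rho_S=12\iint_{[0,1]^2}C\,du\,dv-3$, split the square along the singularity curve $u_i=u_k^{\xi_{c,i}/\xi_{c,k}}$ (your sets $\{x<y\}$ and $\{x>y\}$ after $x=u_i^{1/\xi_{c,i}}$, $y=u_k^{1/\xi_{c,k}}$ are exactly this split), and express each piece as a ${}_3F_2(\cdot;-1)$, yielding $I_1=\xi_{c,k}h(\xi_{c,i})/(2A)$ and $I_2=\xi_{c,i}h(\xi_{c,k})/(2A)$ with $A=2\xi_{c,i}+2\xi_{c,k}-\xi_{c,(i,k)}$. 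Where you differ is the intermediate reduction: the paper works through a chain of Gradshteyn--Ryzhik identities (3.197(1), 9.111, 3.211, 9.182(1), 7.512(12)), passing through the Appell function $F_1$ before landing on the ${}_3F_2$, whereas you substitute $x=ty$, apply Euler's integral to get ${}_2F_1(g,1;A+1;-t)/A$, and integrate the series termwise, identifying the sum via $(2p)_k/(2p+1)_k=2p/(2p+k)$ and $(1)_k=k!$; this is a cleaner, self-contained route that avoids table lookups and the Appell function entirely. The exponent bookkeeping ($\xi_{c,(i,k)}=\alpha_{c,(i,k)}+\gamma_{c,(i,k)}$, integrand $pq\,x^{2p-a-1}y^{2q-a-1}(x\wedge y)^a(x+y-xy)^{-g}$) checks out; the only small gap is that the termwise integration over $t\in[0,1]$ deserves a one-line justification, which follows from absolute convergence of the series at $t=1$ since $A+1-g>1$ -- the same condition the paper invokes for the convergence of $h$.
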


Two immediate consequences are formulated next. We note in passing that while
Corollary \ref{MO-spearman-thm}
confirms the findings in Embrechts et al. (2003), Corollary
\ref{pro-as-spearman} is seemingly new.
\begin{corollary}
\label{MO-spearman-thm}
Let $\gamma_{c,(i,k)}\equiv 0$  and leave the rest of the conditions in Theorem \ref{sp-rho}
unchanged. In this case, the Clayton MRF copula reduces to the
Marshall-Olkin copula, succinctly $C_{\boldsymbol{\alpha}}$, with the measure of Spearman rank correlation  given by
 \begin{eqnarray}
\label{marshall-olkin spearman}
\rho_S(C_{\boldsymbol{\alpha}})=\frac{3\alpha_{c,(i,k)}}{2\xi_{c,i}+2\xi_{c,k}-\alpha_{c,(i,k)}}.
\end{eqnarray}
\end{corollary}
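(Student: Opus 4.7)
The plan is to specialize the general Spearman rho formula \eqref{sp-rho-ClaytonMRF-formula} to the case $\gamma_{c,(i,k)} = 0$ and simplify the resulting hypergeometric expression.

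First, I would observe that when $\gamma_{c,(i,k)} \equiv 0$, the decomposition $\xi_{c,(i,k)} = \alpha_{c,(i,k)} + \gamma_{c,(i,k)}$ reduces to $\xi_{c,(i,k)} = \alpha_{c,(i,k)}$, so the term $2\xi_{c,i} + 2\xi_{c,k} - \xi_{c,(i,k)}$ appearing in the denominator of Theorem~\ref{sp-rho} becomes $2\xi_{c,i} + 2\xi_{c,k} - \alpha_{c,(i,k)}$, which matches the denominator sought in \eqref{marshall-olkin spearman}. This handles the algebraic denominator at no cost.

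The main step is to evaluate $h(x)$ when $\gamma_{c,(i,k)} = 0$. Looking at the series definition \eqref{hyperpq}, the Pochhammer symbol satisfies $(0)_0 = 1$ and $(0)_k = 0 \cdot 1 \cdots (k-1) = 0$ for every $k \geq 1$. Hence the entire ${}_3F_2$ series with $a_3 = \gamma_{c,(i,k)} = 0$ collapses to its $k=0$ term, yielding $h(x) = 1$ identically. This is the only nontrivial observation; the rest is arithmetic. In particular, I should remark (or check) that all other parameters remain admissible so that the series representation is valid at $z=-1$, which holds since $|z| = 1$ with positive upper parameters aside from the zero-valued one.

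Substituting $h(\xi_{c,i}) = h(\xi_{c,k}) = 1$ and $\xi_{c,(i,k)} = \alpha_{c,(i,k)}$ into \eqref{sp-rho-ClaytonMRF-formula} gives
\begin{equation*}
\rho_S(C_{\boldsymbol{\alpha}}) = \frac{6(\xi_{c,i} + \xi_{c,k})}{2\xi_{c,i} + 2\xi_{c,k} - \alpha_{c,(i,k)}} - 3 = \frac{6(\xi_{c,i} + \xi_{c,k}) - 3(2\xi_{c,i} + 2\xi_{c,k} - \alpha_{c,(i,k)})}{2\xi_{c,i} + 2\xi_{c,k} - \alpha_{c,(i,k)}} = \frac{3\alpha_{c,(i,k)}}{2\xi_{c,i} + 2\xi_{c,k} - \alpha_{c,(i,k)}},
\end{equation*}
which is precisely \eqref{marshall-olkin spearman}. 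Since no obstacle really arises here, the only thing deserving care is to verify that the Marshall-Olkin reduction of $C_{\boldsymbol{\xi}}$ indeed matches $C_{\boldsymbol{\alpha}}$ in \eqref{bivariate-ddf} when $\gamma_{c,(i,k)} = 0$, so that the formula from Theorem~\ref{sp-rho} is genuinely being applied to the Marshall-Olkin copula; this follows by inspection of \eqref{bivariate-ddf}, where the third factor becomes $1$ and the remaining factors coincide with the bivariate Marshall-Olkin form.
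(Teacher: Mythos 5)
Your proposal is correct and follows essentially the same route as the paper: the paper's proof likewise rests on the single observation that ${}_3F_2(a,b,0;c,d;z)\equiv 1$ (so $h\equiv 1$ when $\gamma_{c,(i,k)}=0$), after which substitution into \eqref{sp-rho-ClaytonMRF-formula} with $\xi_{c,(i,k)}=\alpha_{c,(i,k)}$ gives \eqref{marshall-olkin spearman}. Your added checks (convergence at $z=-1$ and the reduction of \eqref{bivariate-ddf} to the Marshall--Olkin form) are harmless elaborations of the same argument.
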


\begin{corollary}
\label{pro-as-spearman}
Let $\alpha_{c,(i,k)}\equiv 0$ and leave the rest of the conditions in Theorem \ref{sp-rho}
unchanged. In this case, the Clayton MRF copula reduces to the class of non-exchangeable
Archimedean copulas, succinctly $C_{\boldsymbol{\gamma}}$, with the measure of Spearman rank correlation  given by
\begin{eqnarray}
\rho_S(C_{\boldsymbol{\gamma}})=3\left[\ _3F_2\left(1,1,\gamma_{c,(i,k)};2\xi_{c,i}+1,2\xi_{c,k}+1;1 \right) -1\right].
\end{eqnarray}
\end{corollary}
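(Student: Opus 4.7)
The plan is to sidestep the somewhat opaque machinery of Theorem~\ref{sp-rho} and instead compute the Spearman rho directly, since under $\alpha_{c,(i,k)} \equiv 0$ the bivariate Clayton MRF copula collapses to a particularly tractable Archimedean-type form. First I would specialize (\ref{bivariate-ddf}) to this case: the middle factor $(u_i^{\alpha_{c,(i,k)}/\xi_{c,i}} \wedge u_k^{\alpha_{c,(i,k)}/\xi_{c,k}})$ becomes identically $1$, while $\xi_{c,(i,k)} = \alpha_{c,(i,k)} + \gamma_{c,(i,k)} = \gamma_{c,(i,k)}$ gives $\xi_{c,i,\overline{(i,k)}}/\xi_{c,i} = 1 - \gamma_{c,(i,k)}/\xi_{c,i}$. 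Writing $\gamma := \gamma_{c,(i,k)}$, $\xi_i := \xi_{c,i}$, $\xi_k := \xi_{c,k}$ for brevity, and introducing $v_i := u_i^{1/\xi_i}$, $v_k := u_k^{1/\xi_k}$, routine algebra reduces the copula to
\[
C_{\boldsymbol{\gamma}}(u_i, u_k) = v_i^{\xi_i} v_k^{\xi_k} (v_i + v_k - v_i v_k)^{-\gamma},
\]
which confirms the Archimedean reduction announced in the statement.

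Next I would evaluate $I := \int_0^1 \int_0^1 C_{\boldsymbol{\gamma}}(u_i, u_k) \, du_i \, du_k$ by applying the same substitution to the differentials (with $du_i = \xi_i v_i^{\xi_i - 1} dv_i$ and analogously for $v_k$) and exploiting the identity $v_i + v_k - v_i v_k = 1 - (1-v_i)(1-v_k)$, to arrive at
\[
I = \xi_i \xi_k \int_0^1 \int_0^1 v_i^{2\xi_i - 1} v_k^{2\xi_k - 1} \bigl(1 - (1-v_i)(1-v_k)\bigr)^{-\gamma} dv_i \, dv_k.
\]
I would then expand the last factor via the binomial series $(1-w)^{-\gamma} = \sum_{n\geq 0} (\gamma)_n w^n / n!$ with $w = (1-v_i)(1-v_k) \in [0,1)$, swap summation and integration (legitimate by Tonelli since every term is non-negative on the unit square), and recognize the factorized integrand as a product of Beta integrals $B(2\xi_i, n+1)$ and $B(2\xi_k, n+1)$.

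Rewriting the Beta functions through Pochhammer symbols via $\Gamma(2\xi_i + n + 1) = (2\xi_i + 1)_n \Gamma(2\xi_i + 1)$, together with the clean identity $\xi_i \Gamma(2\xi_i)/\Gamma(2\xi_i + 1) = 1/2$ (and symmetrically in $k$), collapses all prefactors and yields
\[
I = \frac{1}{4} \sum_{n=0}^{\infty} \frac{n! \, (\gamma)_n}{(2\xi_i + 1)_n (2\xi_k + 1)_n} = \frac{1}{4} \; {}_3F_2\bigl(1, 1, \gamma; 2\xi_i + 1, 2\xi_k + 1; 1\bigr),
\]
after which the relation $\rho_S = 12 I - 3$ delivers the claim. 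The single non-routine point is that the resulting ${}_3F_2$ is evaluated at unit argument, so I would verify $d = 2\xi_i + 2\xi_k - \gamma > 0$ for absolute convergence; this follows immediately from the set inclusions $\mathcal{RF}_{(i,k)}^m \subseteq \mathcal{RF}_{(i,k)} \subseteq \mathcal{RF}_i$, which force $\gamma_{c,(i,k)} \leq \xi_{c,i} \wedge \xi_{c,k}$. One could alternatively deduce the result from Theorem~\ref{sp-rho} by substituting $\xi_{c,(i,k)} = \gamma_{c,(i,k)}$ and using a contiguous-function identity to convert the two $h(\cdot)$ hypergeometrics at argument $-1$ into the target ${}_3F_2$ at argument $+1$, but this transformation is the genuine obstacle in that route, whereas the direct integration above bypasses it entirely.
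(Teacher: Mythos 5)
Your proof is correct, but it follows a genuinely different route from the paper. The paper deduces the corollary from Theorem \ref{sp-rho}: it substitutes $\alpha_{c,(i,k)}\equiv 0$ into (\ref{sp-rho-ClaytonMRF-formula}) and then invokes an external hypergeometric identity (Theorem 3.4.1 of Su and Furman, 2016b) to combine the two weighted ${}_3F_2$ terms at argument $-1$ into the single ${}_3F_2$ at argument $+1$ — exactly the contiguous-type transformation you identify as the obstacle and deliberately avoid. You instead compute $\int\int_{[0,1]^2} C\,du_i\,du_k$ directly for the reduced copula: the substitution $v_i=u_i^{1/\xi_{c,i}}$, $v_k=u_k^{1/\xi_{c,k}}$, the rewriting $v_i+v_k-v_iv_k=1-(1-v_i)(1-v_k)$, the binomial expansion with Tonelli, and the Beta-integral evaluation are all sound, and your use of $\rho_S=12\int\int C\,du\,dv-3$ is the same equivalent form of Spearman's rho the paper itself employs at the start of the proof of Theorem \ref{sp-rho}; the convergence check $2\xi_{c,i}+2\xi_{c,k}-\gamma_{c,(i,k)}>0$ via $\mathcal{RF}^m_{(i,k)}\subseteq\mathcal{RF}_i$ is also fine (and is in any case guaranteed by Tonelli together with $C\leq 1$). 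What your approach buys is self-containedness and an independent verification of the closed form — it does not lean on the unpublished identity from the companion technical report, whose parameter matching in the paper's own two-line proof is written rather tersely — at the cost of redoing an integration that Theorem \ref{sp-rho} was designed to subsume; the paper's route, by contrast, is essentially free once the general theorem and the quoted identity are accepted.
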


In the next section we study the dependence of extreme default times, i.e., the tail dependence, of the
Clayton subclass of the MRF copulas. As the majority of the existing methods for quantifying tail dependence aim at
random pairs, we specialize the discussion in the next section to the bivariate case, only.
Some of our following results can be extended to the multivariate case with just a bit
of an effort, others are rather involved when explored in higher dimensions and can serve
as great future research topics for a technically adept mathematician.

\section{Tail dependence of the generalized Clayton copula}
\label{forth-con}
Speaking plainly, tail dependence is about the clustering of
extreme events. In the context of default risk, such clustering is written formally as
\begin{eqnarray*}
\mathbf{P}[\tau_i\leq t,\tau_k\leq t]
=1-\mathbf{P}[\tau_i > t]-\mathbf{P}[\tau_k > t]+\mathbf{P}[\tau_i > t,\tau_k>t],
\end{eqnarray*}
for $1\leq i\neq k \leq n$ and $t\uparrow t^\ast$, where $t^\ast$ is the maximal
default time of the risk portfolio $\{i,k\}$. In view of the above and keeping in mind
that the Clayton MRF copulas are in fact survival copulas, i.e., they couple survival functions,
in what follows we restrict our attention to the copula $C_{\boldsymbol{\xi}}(u,\ v):=
\mathbf{P}[(U,V)\in R(u,v)]$, where the
rectangle $R(u,v(:=[0,\ u]\times [0,\ v]$ `shrinks' along the diagonal
$\{(u,v)\in[0,\ 1]:\ u=v\}$ (Subsection \ref{cl-tail-dep-subsec}) or along a more intricate path $(\varphi(u),\psi(u))_{0\leq u,v\leq 1}$, where $u\downarrow 0$, and $\varphi$ and $\psi$ are eligible functions (Subsection \ref{max-tail-dep-suubsec}).

\subsection{Classical measures of tail dependence}
\label{cl-tail-dep-subsec}

Speaking generally, there exist a variety of ways to quantify the extent of tail dependence
in bivariate random vectors with dependence structures gathered by copulas (Nelsen, 2006; Durante and Sempi, 2015).
Arguably the most popular
measure of lower tail dependence is nowadays attributed to Joe (1993) and given by
\begin{eqnarray}
\label{lambda}
\lambda_L:=\lambda_L(C)= \lim_{u \downarrow 0} {C(u,u)\over u}.
\end{eqnarray}
Non-zero (more precisely $(0,\ 1]$) values of (\ref{lambda})
suggest lower tail dependence in $C$.

On a different note,
when limit (\ref{lambda}) is zero, it is often useful to turn to the somewhat more
delicate index of weak tail dependence $\chi_L\in[-1,\ 1]$ (Coles et al., 1999; Fischer and Klein, 2007)
that is given by
\begin{equation}\label{ind-ccl}
\chi_{L}:=\chi_L(C)=\lim_{u\downarrow 0}\frac{2\log u }{\log C(u,u)}-1 ,
\end{equation}
and/or to the index of intermediate tail dependence $\kappa_L:=\kappa_L(C)\in[1,\ 2]$ (Ledford and
Tawn, 1996; Hua and Joe, 2011) that solves the equation
\begin{equation}\label{ind-kkl}
C(u,u)= \ell(u) u^{\kappa_L} \quad \textrm{when} \quad u\downarrow 0,
\end{equation}
assuming that we can find a slowly varying at $0$ function $\ell(u)$.

We next compute indices (\ref{lambda}), (\ref{ind-ccl}) and (\ref{ind-kkl}) in the context of the Clayton subclass of the MRF copulas.  We recollect to this end, that similarly to the general MRF
copulas, the Clayton MRF copula functions admit default specifications with the exogenous
r.f.'s having
stochastically independent hitting times (idiosincratic r.f.'s) and positively orthant dependent or even fully comonotonic
hitting times (systemic r.f.'s).
\begin{proposition}
\label{pro-strong tail dependence}
In the context of the Clayton subclass of the MRF dependencies, we have, for $1\leq i\neq k\leq n$,
 that
\[
\lambda_L(C_{\boldsymbol{\xi}})=\left\{
                                                    \begin{array}{ll}
                                                      0, & \hbox{$\xi_{c,i,\overline{(i,k)}}\neq 0$ and/or $\xi_{c,k,\overline{(i,k)}} \neq0$} \\
                                                      2^{-\gamma_{c,(i,k)}}, & \hbox{$\xi_{c,i,\overline{(i,k)}}=\xi_{c,k,\overline{(i,k)}}=0$.}
                                                    \end{array}
                                                  \right.
\]
\end{proposition}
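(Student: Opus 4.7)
My plan is to substitute $u_i=u_k=u$ into the bivariate Clayton MRF copula (\ref{bivariate-ddf}) and perform an asymptotic analysis of each of its four factors as $u\downarrow 0$. Each factor is purely a power of $u$ (modulo an asymptotically constant prefactor), so $C_{\boldsymbol{\xi}}(u,u)/u$ will be asymptotic to a constant times $u^{E-1}$ for a suitable exponent $E$ depending only on the $\xi$'s, and the proof reduces to computing $E$ and its prefactor.

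First I would simplify the minimum: for $u\in(0,1)$ one has $u^{\alpha_{c,(i,k)}/\xi_{c,i}}\wedge u^{\alpha_{c,(i,k)}/\xi_{c,k}}=u^{\alpha_{c,(i,k)}/m}$, where $m=\xi_{c,i}\wedge \xi_{c,k}$. Second, I would expand the Archimedean factor: since the dominant term in $u^{-1/\xi_{c,i}}+u^{-1/\xi_{c,k}}$ is of order $u^{-1/m}$ as $u\downarrow 0$, one obtains
\[
\left(u^{-1/\xi_{c,i}}+u^{-1/\xi_{c,k}}-1\right)^{-\gamma_{c,(i,k)}}\sim c_{i,k}^{-\gamma_{c,(i,k)}}\,u^{\gamma_{c,(i,k)}/m},
\]
with $c_{i,k}=2$ if $\xi_{c,i}=\xi_{c,k}$ and $c_{i,k}=1$ otherwise. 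Multiplying all four factors and using $\alpha_{c,(i,k)}+\gamma_{c,(i,k)}=\xi_{c,(i,k)}$ yields
\[
\frac{C_{\boldsymbol{\xi}}(u,u)}{u}\sim c_{i,k}^{-\gamma_{c,(i,k)}}\,u^{E-1},\qquad E=\frac{\xi_{c,i,\overline{(i,k)}}}{\xi_{c,i}}+\frac{\xi_{c,k,\overline{(i,k)}}}{\xi_{c,k}}+\frac{\xi_{c,(i,k)}}{m}.
\]

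The final step is the index-set identity $\xi_{c,i}=\xi_{c,i,\overline{(i,k)}}+\xi_{c,(i,k)}$ (and likewise for $k$), a direct consequence of definitions (\ref{RFall})--(\ref{RFihnotsing}). Taking without loss of generality $m=\xi_{c,i}$, this identity gives $E=1+\xi_{c,k,\overline{(i,k)}}/\xi_{c,k}$, so $E\geq 1$ always, with equality iff $\xi_{c,k,\overline{(i,k)}}=0$; combined with $\xi_{c,i}\leq \xi_{c,k}$ and the analogous identity for $i$, this forces $\xi_{c,i}=\xi_{c,k}=\xi_{c,(i,k)}$, which is exactly the condition $\xi_{c,i,\overline{(i,k)}}=\xi_{c,k,\overline{(i,k)}}=0$. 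In that case $c_{i,k}=2$, so $\lambda_L=2^{-\gamma_{c,(i,k)}}$; otherwise $E>1$ and the limit is $0$.

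The only subtle bookkeeping is the case-split hiding inside the minimum $m$: in the equal case $\xi_{c,i}=\xi_{c,k}$ the two slacks $\xi_{c,\cdot,\overline{(i,k)}}$ vanish or survive together (producing the prefactor $c_{i,k}=2$ precisely on the nontrivial region), while in the unequal case the strict inequality $\xi_{c,i}\vee \xi_{c,k}>\xi_{c,(i,k)}$ automatically makes at least one slack strictly positive and kills the limit. No deeper analytic input is required; the only real task is carefully tracking exponents through the four multiplicative factors.
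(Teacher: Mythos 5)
Your proposal is correct and follows essentially the same route as the paper: both substitute $u_i=u_k=u$ into (\ref{bivariate-ddf}) and do elementary exponent-counting as $u\downarrow 0$, using the identity $\xi_{c,\cdot}=\xi_{c,\cdot,\overline{(i,k)}}+\xi_{c,(i,k)}$ (with $\xi_{c,(i,k)}=\alpha_{c,(i,k)}+\gamma_{c,(i,k)}$) to identify when the total exponent equals one and to pick up the factor $2^{-\gamma_{c,(i,k)}}$ in the equal-margins case. Your unified exponent $E$ merely streamlines the paper's explicit two-case argument; the substance is identical.
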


Hence, the copula of the random default times $(\tau_i,\ \tau_k)'$
is lower tail dependent in the sense of (\ref{lambda})
if the underlying default specification does not include idiosyncratic
r.f.'s.  Furthermore, the higher the contribution of the systemic r.f.'s is
(higher values of $\alpha_{c,(i,k)}$ and thus, for fixed margins, lower values of
$\gamma_{c,(i,k)}$), the more lower tail dependent the copula of the default times $(\tau_i,\ \tau_k)'$
is.

\begin{proposition}
\label{pro-two indices}
Within the Clayton subclass of the MRF dependencies, we have, for $1\leq i\neq k\leq n$, that
the index of weak lower tail dependence is given by
\begin{eqnarray}
\label{xl-genCl}
\chi_L(C_{\boldsymbol{\xi}})= \frac{\xi_{c,(i,k)}}{\xi_{c,i}+\xi_{c,i,\overline{(i,k)}}}
\bigwedge \frac{\xi_{c,(i,k)}}{\xi_{c,k}+\xi_{c,k,\overline{(i,k)}}} ,
\end{eqnarray}
whereas the index of intermediate lower tail dependence is given by
\begin{eqnarray}
\label{kl-genCl}
\kappa_L(C_{\boldsymbol{\xi}})=2-\frac{\xi_{c,(i,k)}}{\xi_{c,i}}\bigwedge\frac{\xi_{c,(i,k)}}{\xi_{c,k}}.
\end{eqnarray}
\end{proposition}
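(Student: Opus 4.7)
My plan is to insert the bivariate formula (\ref{bivariate-ddf}) into the defining limits (\ref{ind-ccl}) and (\ref{ind-kkl}) and extract the leading order of $C_{\boldsymbol{\xi}}(u,u)$ as $u\downarrow 0$. Setting $u_i=u_k=u$ factors the diagonal into three blocks: a pure power $u^{\xi_{c,i,\overline{(i,k)}}/\xi_{c,i}+\xi_{c,k,\overline{(i,k)}}/\xi_{c,k}}$, a minimum $u^{\alpha_{c,(i,k)}/\xi_{c,i}}\wedge u^{\alpha_{c,(i,k)}/\xi_{c,k}}$, and a Clayton-type bracket $(u^{-1/\xi_{c,i}}+u^{-1/\xi_{c,k}}-1)^{-\gamma_{c,(i,k)}}$. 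Because everything is symmetric under swapping $i$ and $k$, I assume without loss of generality $\xi_{c,i}\le\xi_{c,k}$, and will recover the general statement at the end by replacing the specific index with a $\wedge$.

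Under this ordering, the next step is to simplify the two non-power blocks. Since $u\in(0,1)$ makes $u^b$ decreasing in $b$, the inequality $\alpha_{c,(i,k)}/\xi_{c,i}\ge\alpha_{c,(i,k)}/\xi_{c,k}$ gives the middle factor $u^{\alpha_{c,(i,k)}/\xi_{c,i}}$, while factoring the dominant $u^{-1/\xi_{c,i}}$ from the Clayton bracket yields
\begin{equation*}
\bigl(u^{-1/\xi_{c,i}}+u^{-1/\xi_{c,k}}-1\bigr)^{-\gamma_{c,(i,k)}}=u^{\gamma_{c,(i,k)}/\xi_{c,i}}\,L(u),\qquad L(u):=\bigl(1+u^{1/\xi_{c,i}-1/\xi_{c,k}}-u^{1/\xi_{c,i}}\bigr)^{-\gamma_{c,(i,k)}},
\end{equation*}
with $L(u)\to 1$ (or, in the degenerate case $\xi_{c,i}=\xi_{c,k}$, $L(u)\to 2^{-\gamma_{c,(i,k)}}$), hence in both cases slowly varying at zero.

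Collecting exponents and using the identity $\xi_{c,i,\overline{(i,k)}}+\alpha_{c,(i,k)}+\gamma_{c,(i,k)}=\xi_{c,i}$ collapses the $i$-contribution to $1$ and leaves
\begin{equation*}
C_{\boldsymbol{\xi}}(u,u)=u^{\,1+\xi_{c,k,\overline{(i,k)}}/\xi_{c,k}}\,L(u)=u^{\,2-\xi_{c,(i,k)}/\xi_{c,k}}\,L(u).
\end{equation*}
Plugging this into (\ref{ind-ccl}) kills $L$ in the logarithmic ratio and gives
\begin{equation*}
\chi_L(C_{\boldsymbol{\xi}})=\frac{2}{2-\xi_{c,(i,k)}/\xi_{c,k}}-1=\frac{\xi_{c,(i,k)}}{2\xi_{c,k}-\xi_{c,(i,k)}}=\frac{\xi_{c,(i,k)}}{\xi_{c,k}+\xi_{c,k,\overline{(i,k)}}},
\end{equation*}
while comparing with (\ref{ind-kkl}) reads off $\kappa_L(C_{\boldsymbol{\xi}})=2-\xi_{c,(i,k)}/\xi_{c,k}$.

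To restore symmetry, I note that under $\xi_{c,i}\le\xi_{c,k}$ one has $\xi_{c,(i,k)}/\xi_{c,k}\le\xi_{c,(i,k)}/\xi_{c,i}$ and $2\xi_{c,k}-\xi_{c,(i,k)}\ge 2\xi_{c,i}-\xi_{c,(i,k)}$, so the values computed above are precisely the $\wedge$ of the two candidates one would obtain by the opposite ordering, which is exactly what (\ref{xl-genCl}) and (\ref{kl-genCl}) state. The main obstacle is the bookkeeping in step two: correctly identifying which term dominates in each of the two non-power blocks under the WLOG assumption, confirming that the residual prefactor $L(u)$ is slowly varying at zero (and not merely bounded), and verifying that the knife-edge case $\xi_{c,i}=\xi_{c,k}$ only alters $L$ by a finite constant and hence does not affect $\chi_L$ or $\kappa_L$.
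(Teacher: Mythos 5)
Your proof is correct and follows essentially the same route as the paper: fix the ordering $\xi_{c,i}\le\xi_{c,k}$, evaluate $C_{\boldsymbol{\xi}}(u,u)$ from (\ref{bivariate-ddf}), factor the dominant power $u^{-1/\xi_{c,i}}$ out of the Clayton bracket, and use $\xi_{c,i,\overline{(i,k)}}+\alpha_{c,(i,k)}+\gamma_{c,(i,k)}=\xi_{c,i}$ to obtain $C_{\boldsymbol{\xi}}(u,u)=u^{2-\xi_{c,(i,k)}/\xi_{c,k}}L(u)$ with $L$ slowly varying. The only cosmetic difference is that the paper reads off $\kappa_L$ and then gets $\chi_L$ via $\chi_L=2/\kappa_L-1$, whereas you compute $\chi_L$ directly from its limit definition (and you also treat the knife-edge case $\xi_{c,i}=\xi_{c,k}$ explicitly), which is equivalent.
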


Indices (\ref{lambda}), (\ref{ind-ccl}) and (\ref{ind-kkl}) may underestimate the amount of tail dependence in copulas that are symmetric or asymmetric, with or without singularities (Furman et al., 2015). The reason is that all the aforementioned indices of lower tail dependence rely entirely on the behaviour of copulas along their main diagonal $(u,\ u)_{0\leq u\leq 1}$. However, the tail dependence of copulas can be substantially stronger along the paths other than the main diagonal. This can be a serious disadvantage, as reported by, e.g., Schmid and Schmidt (2007), Zhang (2008), Li et al. (2014), and Furman et al. (2015). In the next example, we elucidate this phenomenon in the
context of the Clayton subclass of the MRF dependencies.

\begin{example}
\label{ex-cl-tail}
Consider the bivariate Clayton MRF copula with $\gamma_{c,(i,k)}\equiv 0$,
$\xi_{c,i,\overline{(i,k)}}\neq 0$ and $\xi_{c,k,\overline{(i,k)}}\neq 0$. Then, by
(\ref{bivariate-ddf}),
\begin{eqnarray}
\label{MO-C}
C_{\boldsymbol{\alpha}}(u_i,u_k)=u_i^{\xi_{c,i,\overline{(i,k)}}\over\xi_{c,i}}u_k^{\xi_{c,k,\overline{(i,k)}}\over\xi_{c,k}}\left(
u_i^{\alpha_{c,(i,k)}\over \xi_{c,i}}\wedge
u_k^{\alpha_{c,(i,k)}\over \xi_{c,k}}\right),\ (u_i,u_k)'\in [0,1]^2.
\end{eqnarray}
Appealing to Propositions \ref{pro-strong tail dependence} and \ref{pro-two indices},
we readily have, for $1\leq i\neq k\leq n$, that
\[
\lambda_L(C_{\boldsymbol{\alpha}})=0
\textnormal{ and }
\kappa_L(C_{\boldsymbol{\alpha}})=2-\frac
{\alpha_{c,(i,k)}}{\xi_{c,i}}\bigwedge\frac
{\alpha_{c,(i,k)}}{\xi_{c,k}}.
\]

Then denote by $\lambda^\ast_L(C_{\boldsymbol{\alpha}})$ and
$\kappa_L^\ast(C_{\boldsymbol{\alpha}})$ two indices \`a la
(\ref{lambda}) and (\ref{ind-kkl}), respectively, but along an alternative than
diagonal path, and let such path be the singularity path $\left(u^{2\xi_{c,i}/(\xi_{c,i}+\xi_{c,k})}\right.$,
$\left.u^{2\xi_{c,k}/(\xi_{c,i}+\xi_{c,k})}\right)_{0\leq u\leq 1}$.
In this case, we readily obtain that
\[
\lambda_L^*(C_{\boldsymbol{\alpha}})=0=\lambda_L(C_{\boldsymbol{\alpha}})
\textnormal{ and }
\kappa_L^*(C_{\boldsymbol{\alpha}})=2-\frac{2\alpha_{c,(i,k)}}{\xi_{c,i}+\xi_{c,k}}
\leq \kappa_L(C_{\boldsymbol{\alpha}}),
\]
where the equality holds only if $\xi_{c,i}=\xi_{c,k}$.
\end{example}

To conclude, Example \ref{ex-cl-tail} shows that in the context of the Clayton MRF copulas, all classic indices of tail dependence may not yield the maximal measures of extreme default times' co-movements.

\subsection{Measures of maximal tail dependence}
\label{max-tail-dep-suubsec}
If there existed a one word paradigm that could characterize the modern regulatory accords in
financial risk management, then it would be `prudence'. Indeed, regulators around the globe
have been making tremendous efforts to convey the necessity of modelling the effect
of `low probability/high severity risks' on the risk portfolios of insurance companies and banks.
We next formally introduce measures of maximal tail dependence. To this end, we heavily borrow from
Furman et al. (2015).

\begin{definition}
\label{paths}
A function $\varphi: [0,1] \to [0,1]$ is called {\it admissible} if it satisfies the following conditions:
\begin{enumerate}
\item [(C1)]
$\varphi(u) \in [u^2,1] $ for every $u\in [0,1] $; and
\item [(C2)]
$\varphi(u)$ and $u^2/\varphi(u) $ converge to $0$ when $u\downarrow 0$.
\end{enumerate}
Then the path $(\varphi(u),u^2/\varphi(u))_{0\le u \le 1}$ is admissible whenever the function $\varphi $ is admissible. Also, we denote by $\mathcal{A}$ the set of all admissible
functions $\varphi$.
\end{definition}

A number of observations are instrumental to clarify the definition. First,
condition (C1) makes sure that both $\varphi(u)\in[0,\ 1]$ and
$u^2/\varphi(u) \in [0,\ 1]$, whereas  condition (C2) is motivated by the fact that we are
interested in the behavior of the copula $C$
near the lower-left vertex of its domain of definition. Second,
it is clear that the function $\varphi_0(u)=u,\ u\in[0,\ 1]$ is admissible and yields the main
diagonal $(u,\ u)_{0\leq u\leq 1}$. Last but not least, for the independence
copula, it holds that $C^{\perp}(\varphi(u),u^2/\varphi(u))=u^2,\  u\in[0,\ 1]^2$,
which is path-independent as expected, thus warranting the choice
$\zeta(u)=u^2/\varphi(u),\ u\in [0,\ 1]$.

In order to determine the strongest extreme co-movements of risks
for any copula $C$, we search for functions $\varphi\in\mathcal{A}$ that maximize the probability
\[
\Pi_{\varphi }(u) =C\big (\varphi(u),u^2/\varphi(u)\big ), \;\;\; u \in (0,1)
\]
or, equivalently, the function
\[
d_\varphi(C,C^\perp)(u)=C\big (\varphi(u),u^2/\varphi(u)\big )-
C^\perp (\varphi(u),u^2/\varphi(u)\big ), \;\;\; u \in (0,1),
\]
which is non-negative for PQD copulas $C$.
Then an admissible function $\varphi^* \in \mathcal{A} $ is called {\it a function of maximal dependence} if
\begin{equation}
\label{Pi-m}
\Pi_{\varphi^* }(u)=\max_{\varphi \in \mathcal{A} }\Pi_{\varphi }(u)
\end{equation}
for all $u\in (0,1)$. The corresponding admissible path
$(\varphi^*(u),u^2/\varphi^*(u))_{0\leq u\leq 1}$ is called
{\it a path of maximal dependence}.
Generally speaking, the path $\varphi^\ast$ is not unique, but for each such path the value of $\Pi_{\varphi^\ast}$ is the same. In what follows, we use the notation $\Pi^*(u)$ instead of $\Pi_{\varphi^* }(u)$.

Prudent variants of measures
(\ref{lambda}), (\ref{ind-ccl}) and (\ref{ind-kkl}) are then introduced as
\begin{equation}\label{lambdaast}
\lambda_L^*:=\lambda_L^*(C)= \lim_{u \downarrow 0} {\Pi^*(u)\over u},
\textnormal{ instead of }
\lambda_{L}(C)=\lim_{u \downarrow 0} \frac{C(u,u)}{u}
\end{equation}
and
\begin{equation}
\label{chaiast}
\chi_L^*:=\chi_L^*(C)=\lim_{u\downarrow 0}\frac{2\log u }{\log \Pi^*(u)}-1,
\textnormal{ instead of }
\chi_L=\lim_{u\downarrow 0}\frac{2\log u }{\log C(u,\ u)}-1,
\end{equation}
subject to the existence of the limits, and also
\begin{equation}
\label{kappaast}
\Pi^*(u)=\ell^*(u) u^{\kappa_L^*},\ u\downarrow 0,
\textnormal{ as opposed to }
\Pi(u)=\ell(u) u^{\kappa_L},\ \quad u\downarrow 0,
\end{equation}
assuming that there exist slowly varying at zero functions $\ell^\ast(u)$ and $\ell(u)$.

A useful technique for deriving function(s) of maximal dependence, and thus in turn of the corresponding
indices, consists of three
steps:
\begin{enumerate}
  \item [(S1)] search for critical points of the function $x \mapsto C(x,u^2/x)$ over the interval $[u^2,1]$ and for each $u\in [0,\ 1]$;
  \item [(S2)] check which of the solution(s) is/are global maximum/maxima; and
  \item [(S3)] verify that the function $u\mapsto \varphi^*(u)$  is admissible.
\end{enumerate}

\begin{figure}[h!]
\centering
\includegraphics[width=7.5cm,height=7.5cm]{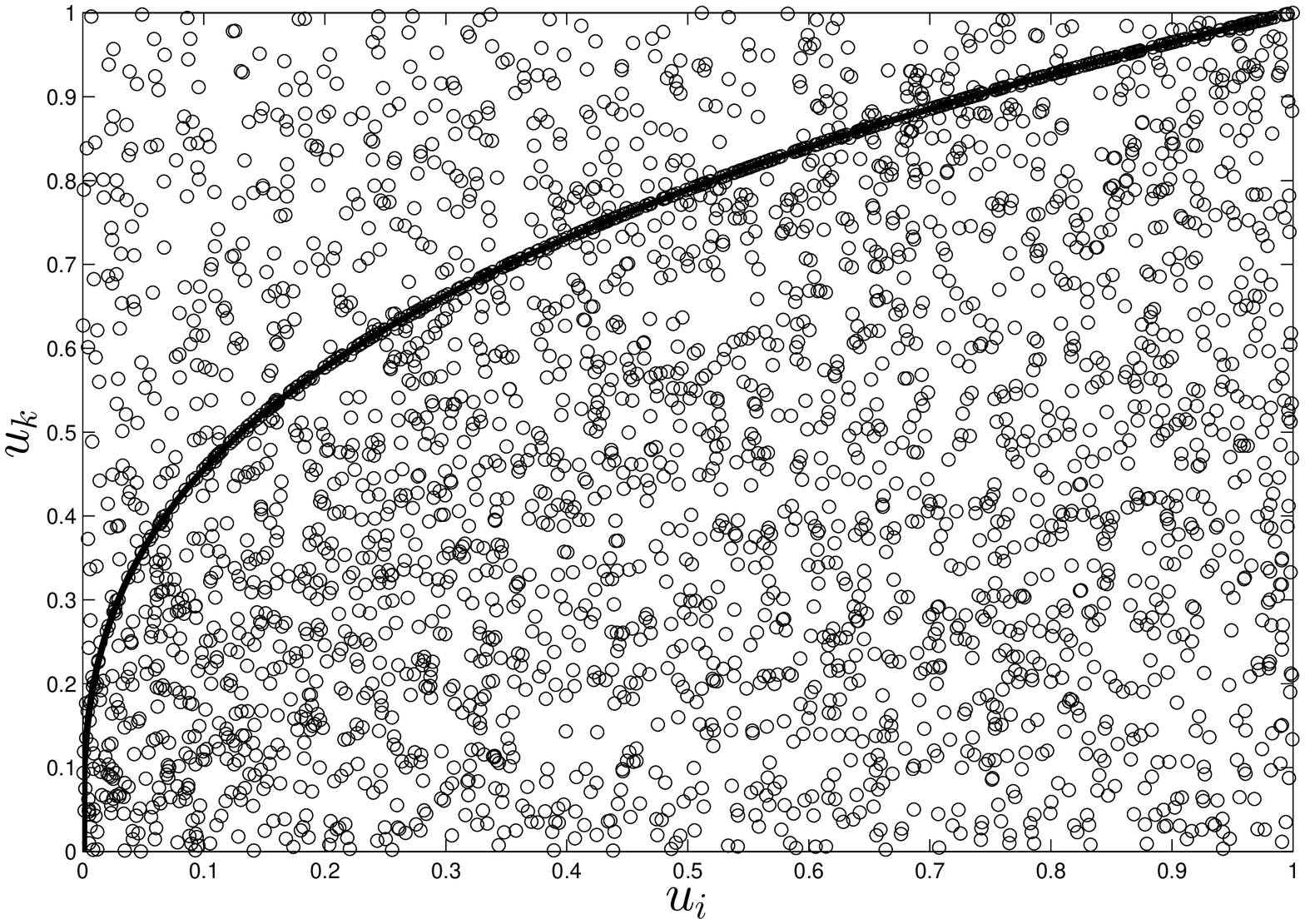}
\hfill
\includegraphics[width=7.5cm,height=7.5cm]{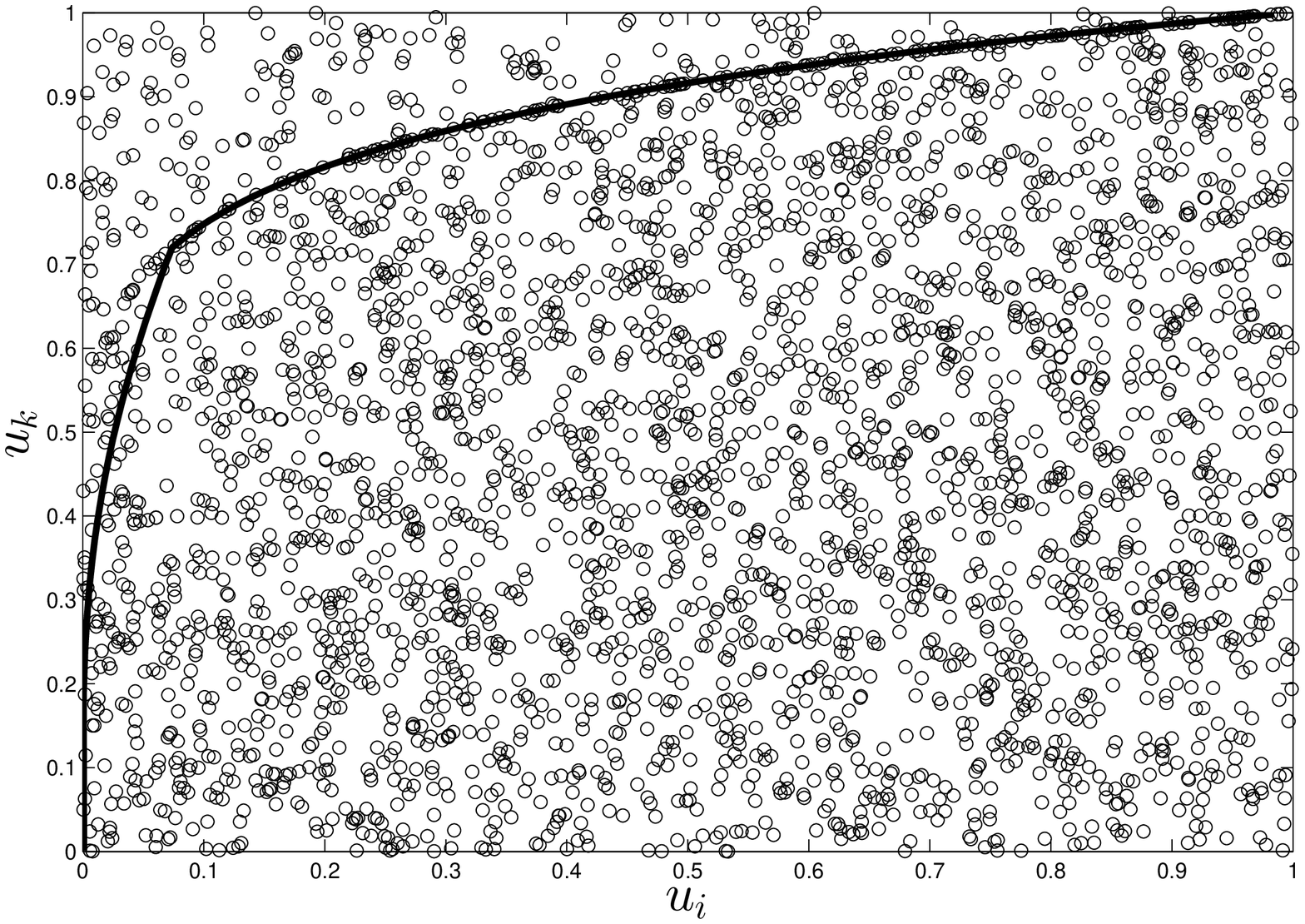}
\caption{Scatterplots of  the Clayton MRF copula for $\xi_{c,i,\overline{(i,k)}}=3,\ \xi_{c,k,\overline{(i,k)}}=0.3,\ \gamma_{c,(i,k)}=0.5,\  \alpha_{c,(i,k)}=0.6$ (left panel) and $\xi_{c,i,\overline{(i,k)}}=10,\ \xi_{c,k,\overline{(i,k)}}=0.3,\ \gamma_{c,(i,k)}=0.5,\   \alpha_{c,(i,k)}=0.6$ (right panel) with the paths of maximal dependence superimposed on both panels.}
\label{fig-mdp-simulation}
\end{figure}

We next formulate and prove the main result of this section. Figure \ref{fig-mdp-simulation}
visualizes some of the notions in it.
\begin{theorem}
\label{pro-general mdp}
Consider the Clayton subclass of the MRF copulas, then we have, for $1\leq i\neq k\leq n$,
 that
\begin{itemize}
\item the index of maximal strong lower tail dependence is given by
\[
\lambda^*_L(C_{\boldsymbol{\xi}})=\lambda_L(C_{\boldsymbol{\xi}});
\]
\item the index of maximal weak lower tail dependence is given by
\[
\chi^*_{L}(C_{\boldsymbol{\xi}})=\frac{\xi_{c,(i,k)}}{\xi_{c,i,\overline{(i,k)}}+\xi_{c,(i,k)}+\xi_{c,k,\overline{
(i,k)}}};
\]
\item the index of  maximal intermediate lower tail dependence is given by
\[
\kappa^*_L(C_{\boldsymbol{\xi}})=2\left(
1-\frac{\xi_{c,(i,k)}}{\xi_{c,i,\overline{(i,k)}}+2\xi_{c,(i,k)}+\xi_{c,k,\overline{(i,k)}}}
\right).
\]
\end{itemize}
\end{theorem}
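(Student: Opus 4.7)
The plan is to apply the three-step recipe (S1)-(S3) stated just before the theorem to the bivariate Clayton MRF copula (\ref{bivariate-ddf}). The crux is to identify the path of maximal dependence $\varphi^\ast$ and the leading power of $\Pi^\ast(u) = C_{\boldsymbol{\xi}}(\varphi^\ast(u), u^2/\varphi^\ast(u))$ as $u\downarrow 0$; the three indices will then follow directly from the definitions (\ref{lambdaast})-(\ref{kappaast}).

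First, I parametrize candidate admissible paths by power laws $\varphi_\theta(u) = u^\theta$ with $\theta \in (0,2)$, which clearly belong to $\mathcal{A}$ by Definition \ref{paths}. Using that $u^{a_1}\wedge u^{a_2}=u^{\max(a_1,a_2)}$ for $u\in(0,1)$ together with the asymptotic
\[
\bigl(u_i^{-1/\xi_{c,i}}+u_k^{-1/\xi_{c,k}}-1\bigr)^{-\gamma_{c,(i,k)}}\sim \bigl(u_i^{-1/\xi_{c,i}}+u_k^{-1/\xi_{c,k}}\bigr)^{-\gamma_{c,(i,k)}}\quad\text{as }(u_i,u_k)\downarrow(0,0),
\]
substitution into (\ref{bivariate-ddf}) yields $\Pi_{\varphi_\theta}(u) \sim L(u)\, u^{h(\theta)}$ for a slowly varying function $L$, with
\[
h(\theta) = \theta\,\tfrac{\xi_{c,i,\overline{(i,k)}}}{\xi_{c,i}} + (2-\theta)\,\tfrac{\xi_{c,k,\overline{(i,k)}}}{\xi_{c,k}} + \xi_{c,(i,k)}\,\max\!\Bigl(\tfrac{\theta}{\xi_{c,i}},\,\tfrac{2-\theta}{\xi_{c,k}}\Bigr).
\]

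Second, I minimize $h$ over $\theta\in(0,2)$. The key algebraic observation is that the kink of the $\max$-term sits at $\theta^\ast:=2\xi_{c,i}/(\xi_{c,i}+\xi_{c,k})\in(0,2)$. Using $\xi_{c,i,\overline{(i,k)}}+\xi_{c,(i,k)}=\xi_{c,i}$ and the analogous identity for $k$, one checks that $h(\theta)=2-\theta\,\xi_{c,(i,k)}/\xi_{c,i}$ on $[0,\theta^\ast]$ (strictly decreasing) and $h(\theta)=2\,\xi_{c,k,\overline{(i,k)}}/\xi_{c,k}+\theta\,\xi_{c,(i,k)}/\xi_{c,k}$ on $[\theta^\ast,2]$ (strictly increasing), so the minimum is attained at $\theta^\ast$ and equals
\[
h(\theta^\ast)=2-\frac{2\xi_{c,(i,k)}}{\xi_{c,i}+\xi_{c,k}}=2\left(1-\frac{\xi_{c,(i,k)}}{\xi_{c,i,\overline{(i,k)}}+2\xi_{c,(i,k)}+\xi_{c,k,\overline{(i,k)}}}\right).
\]
The candidate $\varphi^\ast(u)=u^{\theta^\ast}$ is admissible (step (S3)). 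To discharge (S2), i.e.\ to rule out non-power-law competitors, one repeats (S1) on $x\mapsto C_{\boldsymbol{\xi}}(x,u^2/x)$ directly: the wedge and the dominant summand in the Clayton-type sum induce piecewise structure whose two breakpoints coincide on the curve $x=u^{\theta^\ast}$, forcing every global maximizer to lie on this common kink.

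Third, I read off the three indices. The expression above for $h(\theta^\ast)$ is exactly $\kappa_L^\ast(C_{\boldsymbol{\xi}})$. From (\ref{chaiast}),
\[
\chi_L^\ast(C_{\boldsymbol{\xi}})=\frac{2}{h(\theta^\ast)}-1=\frac{\xi_{c,(i,k)}}{\xi_{c,i,\overline{(i,k)}}+\xi_{c,(i,k)}+\xi_{c,k,\overline{(i,k)}}}.
\]
For $\lambda_L^\ast$, a dichotomy arises: either $\xi_{c,i,\overline{(i,k)}}\vee\xi_{c,k,\overline{(i,k)}}>0$, in which case $h(\theta^\ast)>1$, so $\lambda_L^\ast=0=\lambda_L$ by Proposition \ref{pro-strong tail dependence}; or $\xi_{c,i,\overline{(i,k)}}=\xi_{c,k,\overline{(i,k)}}=0$, whence $\xi_{c,i}=\xi_{c,k}=\xi_{c,(i,k)}$ and $\theta^\ast=1$ so that $\varphi^\ast$ coincides with the diagonal, and a direct substitution into (\ref{bivariate-ddf}) recovers $\lambda_L^\ast=2^{-\gamma_{c,(i,k)}}=\lambda_L$.

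The main obstacle is the double piecewise structure of $C_{\boldsymbol{\xi}}$: both the wedge $u_i^{\alpha_{c,(i,k)}/\xi_{c,i}}\wedge u_k^{\alpha_{c,(i,k)}/\xi_{c,k}}$ and the asymptotically dominant term in $u_i^{-1/\xi_{c,i}}+u_k^{-1/\xi_{c,k}}$ produce a regime change along paths $u^\theta$, and the clean form of the answer hinges on the algebraic coincidence that both regime changes occur at the \emph{same} exponent $\theta^\ast$. Establishing rigorously that this shared kink is indeed the global maximizer of $x\mapsto C_{\boldsymbol{\xi}}(x,u^2/x)$ over all admissible paths, rather than some interior smooth critical point hidden inside one of the two regimes, is the technical core of the argument.
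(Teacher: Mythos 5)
Your power-law analysis (the function $h(\theta)$, its minimization at $\theta^\ast=2\xi_{c,i}/(\xi_{c,i}+\xi_{c,k})$, and the reading off of $\kappa_L^\ast$, $\chi_L^\ast$, $\lambda_L^\ast$) reproduces the correct exponent, but the step you yourself flag as the technical core is where the argument breaks: your claim that the coincidence of the two regime changes at $x=u^{\theta^\ast}$ ``forces every global maximizer to lie on this common kink'' is false in general. The paper's own analysis of $x\mapsto C_{\boldsymbol{\xi}}(x,u^2/x)$ shows that the function is strictly increasing on $(u^2,\,u^{2\xi_{c,i}/(\xi_{c,i}+\xi_{c,k})})$, but on $(u^{2\xi_{c,i}/(\xi_{c,i}+\xi_{c,k})},\,1)$ the sign of $\eta_k$ at the left endpoint is indeterminate, so there may be a unique \emph{interior} smooth critical point which is then the global maximum; Figure \ref{fig-mdp-function} (right panel, $\xi_{c,i,\overline{(i,k)}}=10$, $\xi_{c,k,\overline{(i,k)}}=0.3$, $\gamma_{c,(i,k)}=0.5$, $\alpha_{c,(i,k)}=0.6$) exhibits exactly this case, with the maximizer strictly off the singularity/kink curve. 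Consequently restricting to power-law paths only yields $\Pi^\ast(u)\geq \Pi_{\varphi_{\theta^\ast}}(u)$, i.e.\ one-sided control of $\kappa_L^\ast$; without handling the interior-maximum case you have no matching upper bound on $\Pi^\ast(u)$, and your proof of (S2) is not merely incomplete but rests on an incorrect assertion.

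The missing idea, which is how the paper closes the gap, is to treat the interior critical point implicitly: the maximizer solves $\zeta_k(x)=0$, which can be rewritten as $x=u^{2\xi_{c,i}/(\xi_{c,i}+\xi_{c,k})}r(x)$ with $r(x)$ tending to a finite positive constant as $u\downarrow 0$. Substituting this back into (\ref{MDP-copula function1}) shows that even when the path of maximal dependence is not explicit (and is not $u^{\theta^\ast}$), one still has $\Pi^\ast(u)=u^{1+(\xi_{c,i,\overline{(i,k)}}+\xi_{c,k,\overline{(i,k)}})/(\xi_{c,i}+\xi_{c,k})}s(u)$ with $s(u)$ convergent, so the exponent — and hence all three indices — is unchanged. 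Your dichotomy for $\lambda_L^\ast$ and the relation $\chi_L^\ast=2/\kappa_L^\ast-1$ are fine once this is in place, but as written the proposal proves the theorem only in the parameter regime where the maximum actually sits at the kink.
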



\section{Conclusions}
\label{sec-concl}

Copulas have become an important element of the best practice ERM, superseding in many
contexts other more traditional approaches to modelling stochastic dependence.
However, choosing the right copula is not an easy call, and the temptation to make the
decision in
favour of a tractable rather than a meaningful copula is high. The use of the
Gaussian copula to price the collateralized debt obligations is one illuminating example.

A good copula should conform to a problem at hand, be asymmetric and exhibit some extent
of tail dependence. The MRF copulas that we have introduced and studied in this
paper are exactly such. Namely, they admit  stochastic
representations that are meaningful in the context of the ERM, arise from a number of default risk specifications with stochastic default barriers and are in general not symmetric in their domains of definition. Furthermore, the MRF copulas cover the full range of non-negative dependence when measured by the Spearman rho index of rank correlation, allow for a variety of tail dependences, and are yet quite tractable analytically.

Among  immediate applications,
the MRF copulas
generalize the CreditRisk$^+$ framework by augmenting systemic risk factors having
fully comonotonic hitting times, thus making the overall default times more positively orthant
dependent. As the CreditRIsk$^+$ method has been criticized for underestimating the
clustering of real world defaults, we believe that the MRF copulas may serve as a reasonable
supplement. That being said, as the notion of systemic risk is of fundamental importance
in the nowadays era of prudent risk management, we think that the MRF copulas may be
of interest for modelling general dependent (insurance) risks, well beyond the context of credit
risk.

\section*{Acknowledgements}
We are grateful to Prof. Dr. Paul Embrechts and all participants of the ETHs
Series of Talks in Financial and Insurance Mathematics for feedback and insights.

Our research has been supported by the Natural Sciences and Engineering Research Council (NSERC) of Canada. Jianxi Su also acknowledges the financial support of the Government of Ontario and MITACS Canada via, respectively, the Ontario Graduate Scholarship program
and the Elevate Postdoctoral fellowship.

\section*{References}



\hangindent=\parindent\noindent Albrecher, H., Constantinescu, C., Loisel, S., 2011. Explicit ruin formulas for models with dependence among risks.  {Insurance: Mathematics and Economics}   {48}(2), 265--270.

\hangindent=\parindent\noindent Asimit, A.V., Furman, E., Vernic, R., 2010. On a multivariate Pareto distribution.
  {Insurance: Mathematics and Economics}   {46}(2), 308--316.

\hangindent=\parindent\noindent Azizpour, S., Giesecke, K., 2008. Self-exciting corporate default: Contagion vs. frailty. Technical report, Stanford University, Stanford.


\hangindent=\parindent\noindent Bielecki, T.R., Rutkowski, M., 2004.   {Credit Risk: Modeling, Valuation and Hedging.}
Spinger, Berlin.

\hangindent=\parindent\noindent Bowers, N.L., Gerber, H.U., Hickman, J.C., Jones, D.A., Nesbitt, C.J., 1997.
  {Actuarial Mathematics}, 2nd ed. Society of Actuaries, Schaumburg.


\hangindent=\parindent\noindent Cherubini, U., Durante, F., Mulinacci, S. (Eds.), 2013.   {Marshall-Olkin Distributions -
Advances in Theory and Applications}. Springer, Bologna.


\hangindent=\parindent\noindent Choudhry, M., 2010.   {Structured Credit Products: Credit Derivatives and Synthetic Securitisation}, 2nd ed. Wiley, Singapore.

\hangindent=\parindent\noindent Constantinescu, C., Hashorva, E., Ji, L., 2011.   {Archimedean copulas in finite and infinite dimensions - with application to ruin problems.} {Insurance: Mathematics and Economics} {49}(3), 487--495.

\hangindent=\parindent\noindent Clayton, D.G., 1978.   A model for association in bivariate life tables and its application in epidemiological studies of familial tendency in chronic disease incidence. Biometrika 65(1), 141--151. 

\hangindent=\parindent\noindent Coles, S., Heffernan, J., Tawn, J., 1999. Dependence measures for extreme value
analyses.   {Extremes}   {2}(4), 339--365.

\hangindent=\parindent\noindent Das, S.R., Duffie, D., Kapadia, N., Saita, L., 2007. Common failings: How corporate
defaults are correlated.   {Journal of Finance}   {62}(1), 93--117.

\hangindent=\parindent\noindent Denuit, M., Dhaene, J., Goovaerts, M., Kass, R., 2006.. Actuarial Theory for Dependent
Risks: Measures, Orders and Models, Wiley.

\hangindent=\parindent\noindent Dhaene, J., Denuit, M., Goovaerts, M.J., Kaas, R., Vyncke, D., 2002a. The concept of comonotonicity in actuarial science and finance: theory. {Insurance: Mathematics and Economics} {31}(1), 3--33.

 \hangindent=\parindent\noindent Dhaene, J., Denuit, M., Goovaerts, M.J., Kaas, R., Vyncke, D., 2002b. The concept of comonotonicity in actuarial science and finance: applications. {Insurance: Mathematics and Economics} {31}(2), 133--161.


\hangindent=\parindent\noindent Durante, F., Sempi, C., 2015.   {Principles of Copula Theory.} CRC Press, Boca Raton.


\hangindent=\parindent\noindent Embrechts, P., Hofert, M., 2013. A note on generalized inverses. {Mathematical Methods of Operations Research}   {77}(3), 423--432.


\hangindent=\parindent\noindent Escobar, M., Arian, H., Seco, L., 2012. CreditGrades framework within stochastic covariance models.   {Journal of Mathematical Finance}   {2}(4), 303--313.

\hangindent=\parindent\noindent Fang, K-T., Kotz, S., Ng, K-W., 1990.   {Symmetric Multivariate and Related Distributions}. CRC Press, Boca Raton.

\hangindent=\parindent\noindent Fischer, M., Klein, I., 2007. Some results on weak and strong tail dependence coefficients for means of copulas. Technical Report 78/2007, Friedrich-Alexander-Universitat Erlangen-Nurnberg, Lehrstuhl fur Statistik und Okonometrie.




\hangindent=\parindent\noindent Frees, E.W., Valdez, E., 1998. Understanding relationships using copulas. {North American Actuarial Journal} {2}(1), 1-25. 

\hangindent=\parindent\noindent Furman, E., Kuznetsov, A., Su, J., Zitikis, R., 2016. Tail dependence of the Gaussian copula revisited.   {Insurance: Mathematics and Economics} 69, 97--103.

\hangindent=\parindent\noindent Furman, E., Landsman, Z., 2005. Risk capital decomposition for a multivariate dependent gamma portfolio.   {Insurance: Mathematics and Economics}   {37}(3), 635--649.

\hangindent=\parindent\noindent Furman, E., Su, J., Zitikis, R., 2015. Paths and indices of maximal tail dependence.
  {ASTIN Bulletin}   {45}(3), 661--678.


\hangindent=\parindent\noindent Gradshteyn, I.S., Ryzhik, I.M., 2014.   {Table of Integrals, Series, and Products}, 8th ed.
Academic Press, New York.

\hangindent=\parindent\noindent Hua, L., Joe, H., 2011. Tail order and intermediate tail dependence of multivariate copulas.   {Journal of Multivariate Analysis}   {102}(10), 1454--1471.

\hangindent=\parindent\noindent Hull, J.C., White, A.D., 2006. Valuing credit derivatives using an implied copula
approach.   {The Journal of Derivatives}   {14}(2), 8--28.

\hangindent=\parindent\noindent H\"urlimann, W., 2001. Analytical evaluation of economic risk capital for portfolios of gamma risks. {ASTIN Bulletin} {31}(1), 107--122.

\hangindent=\parindent\noindent Jacod, J., Shiryaev, A., 2003.   {Limit Theorems for Stochastic Processes}. Springer, Berlin.

\hangindent=\parindent\noindent Joe, H., 1993. Parametric families of multivariate distributions with given margins.
  {Journal of Multivariate Analysis}   {46}(2), 262--282.

\hangindent=\parindent\noindent Joe, H., 1997.   {Multivariate Models and Dependence Concepts.} CRC Press, Boca Raton.


\hangindent=\parindent\noindent Kole, E., Koedijk, K., Verbeek, M., 2007. Selecting copulas for risk management.
  {Journal of Banking and Finance}   {31}(8), 2405--2423.


\hangindent=\parindent\noindent Lando, D., 2004.   {Credit Risk Modeling: Theory and Applications.} Princeton University Press, Princeton.

\hangindent=\parindent\noindent Ledford, A.W., Tawn, J.A., 1996. Statistics for near independence in multivariate
extreme values.   {Biometrika}   {83}(1), 169--187.

\hangindent=\parindent\noindent Lehmann, E.L., 1966. Some concepts of dependence.   {The Annals of Mathematical
Statistics}   {37}(5), 1137--1153.


\hangindent=\parindent\noindent Li, L., Yuen, K.C., Yang, J., 2014. Distorted mix method for constructing copulas
with tail dependence.   {Insurance: Mathematics and Economics}   {57}, 77--89.

\hangindent=\parindent\noindent Marshall, A.W., Olkin, I., 1967. A generalized bivariate exponential distribution.
  {Journal of Applied Probability}   {4}(2), 291--302.

\hangindent=\parindent\noindent McNeil, A.J., Frey, R., Embrechts, P., 2005. Quantitative Risk Management: Concepts, Techniques, and Tools. Princeton University Press, Princeton.

\hangindent=\parindent\noindent Moschopoulos, P.G., 1985. The distribution of the sum of independent gamma random
variables.   {Annals of the Institute of Statistical Mathematics}   {37}(1), 541--544.

\hangindent=\parindent\noindent Nelsen, R.B., 2006.   {An Introduction to Copulas}, 2nd ed. Springer, New York.






\hangindent=\parindent\noindent Sandstr\"{o}m, A., 2010.   {Handbook of Solvency for Actuaries and Risk Managers: Theory and Practice.} CRC Press, Boca Raton.

\hangindent=\parindent\noindent Schmid, F., Schmidt, R., 2007. Multivariate conditional versions of Spearman's rho
and related measures of tail dependence.   {Journal of Multivariate Analysis}   {98}(6), 1123--1140.


\hangindent=\parindent\noindent Sklar, A., 1959. Fonction de r\'{e}partition \`{a} n dimensions et leurs marges. {Publications de l'Institut de Statistique de l'Universit\'{e} de Paris}   {8}, 229--231.

\hangindent=\parindent\noindent Skoglund, J., Chen, W., 2015.   {Financial Risk Management: Applications in Market, Credit, Asset and Liability Management and Firmwide Risk.} Wiley, Hoboken.

\hangindent=\parindent\noindent Staudt, A., 2010. Tail risk, systemic risk and copulas.   {Casualty Actuarial Society}   {2}, 1--23.

\hangindent=\parindent\noindent Su, J., Furman, E., 2016a. A form of multivariate Pareto distribution with applications to financial risk measurement. ASTIN Bulletin, in press.

\hangindent=\parindent\noindent Su, J., Furman, E., 2016b. Multiple risk factor dependence structures: Distributional properties and applications in actuarial mathematics. Technical report, available at SSRN: http://ssrn.com/abstract=2694308. Accessed on September 1, 2016.

\hangindent=\parindent\noindent Zhang, M.H., 2008. Modelling total tail dependence along diagonals. {Insurance: Mathematics and Economics}   {42}(1), 73--80.

\appendix
\section{Proofs}

\begin{proof}[Proof of Equation \ref{marginal-general-ddf}]
For $t\in\mathbf{R}_{0,+}$ and $i=1,\ldots,n$, we have by construction that
\begin{eqnarray*}
S_i(t):=\mathbf{P}[\tau_i> t]
&=&\mathbf{E}\left[
\mathbf{P}\left[
\sum_{j\in\mathcal{RF}^l_i} N_{\Lambda_jt}+\sum_{j\in\mathcal{RF}^m_i} {}_iN_{\Lambda_jt}=0
\bigg| \boldsymbol{\Lambda}
\right]
\right] \nonumber \\
&=&
\prod_{j\in\mathcal{RF}_i^l} \mathbf{P}[E_{\Lambda_j}>t]
\prod_{j\in\mathcal{RF}_i^m} \mathbf{P}[{}_iE_{\Lambda_j}>t] \nonumber\\
&=&
\prod_{j\in\mathcal{RF}_i^l}  \psi_{{\Lambda_j}}(t)
\prod_{j\in\mathcal{RF}_i^m} \psi_{{\Lambda_j}} (t)=\psi_{{\sum_{j\in\mathcal{RF}_i} \Lambda_j}}(t),
\end{eqnarray*}
which proves the assertion.
\end{proof}

\begin{proof}[Proof of Equation \ref{multivariate-general-ddf}]
By construction and for $t_i\in\mathbf{R}_{0,+},\ i=1,\ldots,n$, we obtain the following string of equations
\begin{eqnarray*}
&&S(t_1,\ldots,t_n):=\mathbf{P}[\tau_1>t_1,\ldots,\tau_n>t_n]\nonumber\\
&=&\mathbf{E}\left[
\mathbf{P}\left[
\sum_{i=1}^n\sum_{j\in\mathcal{RF}^l_i} N_{\Lambda_jt_i}+\sum_{i=1}^n\sum_{j\in\mathcal{RF}^m_i} {}_iN_{\Lambda_jt_i}=0
\bigg| \boldsymbol{\Lambda}
\right]
\right] \nonumber\\
&=&\mathbf{E}\left[
\mathbf{P}\left[
\sum_{j=1}^l \sum_{i\in\mathcal{RC}_j} N_{\Lambda_jt_i}+\sum_{j=l+1}^{l+m} \sum_{i\in\mathcal{RC}_j}  {}_iN_{\Lambda_jt_i}=0
\bigg| \boldsymbol{\Lambda}
\right]
\right]\nonumber\\
&=&\prod_{j=1}^l
\mathbf{P}\left[E_{\Lambda_j}>\bigvee_{i\in\mathcal{RC}_j}t_i \right]\prod_{j=l+1}^{l+m}
\mathbf{P}\left[\bigcap_{i\in\mathcal{RC}_j} {}_iE_{\Lambda_j}>t_i
\right]\nonumber\\
&=&\prod_{j=1}^{l}\psi_{\Lambda_j}\left(\bigvee_{i\in\mathcal{RC}_j}t_i\right)
\prod_{j=l+1}^{l+m}\psi_{\Lambda_j}\left(\sum_{i\in\mathcal{RC}_j}t_i\right),
\end{eqnarray*}
which proves the desired equation.
\end{proof}

\begin{proof}[Proof of Theorem \ref{pro-alternative def1}]
As the equivalence of (D1) and (D2) is trivial, we only prove that (D1) is equivalent to (\ref{Non-hom-defaults}).
By conditioning, we have that, for
$t_i\in\mathbf{R}_{0,+},\ i=1,\ldots,n$,
\begin{eqnarray*}
&&S\left(t_1,\ldots,t_n\bigg|\ \Lambda_j(t_i),N_{\Lambda_j(t_i)},i=1,\ldots,n,j=1,\ldots,l+m\right)\\
&=&\mathbf{P}\left[{}_1E_1>\Theta_{1}(t_1),\ldots,{}_nE_1>\Theta_{n}(t_n)\bigg|\ \Lambda_j(t_i),
N_{\Lambda_j(t_i)},i=1,\ldots,n,j=1,\ldots,l+m\right]\\
&=&\prod_{i=1}^n\exp\left\{-\sum_{j\in \mathcal{RF}^l_i}\mathbf{I}^{\infty}_{\{N_{\Lambda_j(t_i)}>0\}}-\sum_{j\in \mathcal{RF}^m_i}\Lambda_{j}(t_i)\right\}\\
&=&\prod_{j=1}^l \exp\left\{-\sum_{i\in \mathcal{RC}_j} \mathbf{I}^{\infty}_{\{N_{\Lambda_j(t_i)}>0\}} \right\} \prod_{j=l+1}^{l+m}  \exp\left\{-\sum_{i\in \mathcal{RC}_j}\Lambda_{j}(t_i)\right\}.
\end{eqnarray*}
Consequently, the unconditional joint survival function is given, for $t_i\in\mathbf{R}_{0,+},\ i=1,\ldots,n$, by
\begin{eqnarray}
\label{eqn-alternative def1}
&&S(t_1,\ldots,t_n)=\mathbf{E}\left[
S\left(t_1,\ldots,t_n\bigg|\ \Lambda_j(t_i),N_{\Lambda_j(t_i)},i=1,\ldots,n,j=1,\ldots,l+m\right)
\right]\nonumber\\
&=&
\prod_{j=1}^{l}\mathbf{E}
\left[\exp \left\{-
\Lambda_{j}\left(\bigvee_{i\in \mathcal{RC}_j}t_i\right) \right\} \right]
\prod_{j=l+1}^{l+m} \mathbf{E}\left[\exp\left\{-\sum_{i\in\mathcal{RC}_j}\Lambda_{j}(t_i)\right\}\right] ,
\end{eqnarray}
since
\begin{eqnarray*}
&&\mathbf{E}\left[\exp\left\{-\sum_{i\in \mathcal{RC}_j} \mathbf{I}^{\infty}_{\{N_{\Lambda_j(t_i)}>0\}} \right\}\right]=
\mathbf{P}\left[ \bigcap_{i\in \mathcal{RC}_j}\left\{\mathbf{I}^{\infty}_{\{N_{\Lambda_j(t_i)>0}\}}=0\right\}\right]\\
&=&\mathbf{P}
\left[N_{\Lambda_j}\left(\bigvee_{i\in \mathcal{RC}_j}t_i\right)=0 \right]=
\mathbf{E}
\left[\exp \left\{-
\Lambda_{j}\left(\bigvee_{i\in \mathcal{RC}_j}t_i\right) \right\} \right].
\end{eqnarray*}
Finally, by rewriting (\ref{eqn-alternative def1}) in terms of the Laplace transforms of  $\Lambda_j(t)$, we obtain joint survival function (\ref{jointddf}).
This completes the proof.
\end{proof}

\begin{proof}[Proof of Theorem \ref{pro-simulation}]
To prove the `if' part, note that, for $u_i\in [0,\ 1],\ i=1,\ldots,n$, the c.d.f. of $\mathbf{U}=(U_1,\ldots,U_n)'$ is
\begin{eqnarray*}
&&\mathbf{P}[U_1\leq u_1,\ldots,U_n\leq u_n]\\
&=&\mathbf{P}\left[\bigcap_{j\in \mathcal{RF}_i^l}
\left\{\frac{\ln(V_{j})}{\Lambda_j} \leq -\psi_i^{-1}(u_i)\right\}
\textnormal{ and } \bigcap_{j\in \mathcal{RF}_i^m} \left\{\frac{\ln({}_iV_{j})}{\Lambda_j} \leq -\psi_i^{-1}(u_i)\right\}
\textnormal{ for all }i=1,\ldots,n\right]\\
&=& \mathbf{P}\left[
 \bigcap_{j=1}^l \left\{\frac{\ln(V_{j})}{\Lambda_j} \leq -\bigvee_{i\in \mathcal{RC}_j} \psi_i^{-1}(u_i) \right\}\right]
 \mathbf{P}\left[\bigcap_{j=l+1}^{l+m} \bigcap_{i\in \mathcal{RC}_j}\left\{\frac{\ln({}_iV_{j})}{\Lambda_j} \leq -\psi_i^{-1}(u_i)\right\}\right],
\end{eqnarray*}
where
\begin{eqnarray*}
&& \mathbf{P}\left[
 \bigcap_{j=1}^l \left\{\frac{\ln(V_{j})}{\Lambda_j} \leq -\bigvee_{i\in \mathcal{RC}_j} \psi_i^{-1}(u_i) \right\}\right]=
\mathbf{E}\left[\mathbf{P}\left[
 \bigcap_{j=1}^l \left\{\frac{\ln(V_{j})}{\Lambda_j} \leq -\bigvee_{i\in \mathcal{RC}_j} \psi_i^{-1}(u_i) \right\}\bigg | \mathbf{\Lambda} \right]\right]\\
&=&\mathbf{E}\left[\prod_{j=1}^l\exp\left\{-\bigvee_{i\in \mathcal{RC}_j}\Lambda_j\psi_i^{-1}(u_i)  \right\}\right]
=\prod_{j=1}^l\psi_{\Lambda_j}\left(\bigvee_{i\in \mathcal{RC}_j}\psi_i^{-1}(u_i)\right)
\end{eqnarray*}
and similarly
\begin{eqnarray*}
&& \mathbf{P}\left[\bigcap_{j=l+1}^{l+m} \bigcap_{i\in \mathcal{RC}_j}\left\{\frac{\ln({}_iV_{j})}{\Lambda_j} \leq -\psi_i^{-1}(u_i)\right\}\right]=
\mathbf{E}\left[\mathbf{P}\left[\bigcap_{j=l+1}^{l+m} \bigcap_{i\in \mathcal{RC}_j}\left\{\frac{\ln({}_iV_{j})}{\Lambda_j} \leq -\psi_i^{-1}(u_i)\right\}\bigg | \mathbf{\Lambda}\right]\right] \\
&=&\mathbf{E}\left[\prod_{j=l+1}^{l+m} \exp\left\{-\Lambda_j\sum_{i\in\mathcal{RC}_j} \psi_i^{-1}(u_i) \right\}\right]
=\prod_{j=l+1}^{l+m} \psi_{\Lambda_j}\left(\sum_{i\in \mathcal{RC}_j}\psi_i^{-1}(u_i) \right).
\end{eqnarray*}
Hence the joint c.d.f. of $\mathbf{U}$ coincides with (\ref{cop-lin}).  The `only if' part follows by the uniqueness of the
Laplace transform.
This completes the proof.
\end{proof}

\begin{proof}[Proof of Theorem \ref{pro-simultaneous probability}]
Since the non-zero probability of simultaneous default can only come from the risk factors in
$\mathcal{RF}^l_i,\ i=1,\ldots,n$,
we obtain, for any $t\in \mathbf{R}_{0,+}$ and by conditioning on $\boldsymbol{\Lambda}(t):=(\Lambda_{1}(t),\ldots, \Lambda_{l+m}(t))'$, that
\begin{eqnarray*}
&&\mathbf{P}\left[\tau_{i_1}=\cdots=\tau_{i_k} |\ \boldsymbol{\Lambda}(t)\right] \\
&=&\int_{0}^{\infty}\mathbf{P}\left[\bigcap_{i \in \{i_1,\ldots,i_k \}}
\left\{\sum_{j\in \mathcal{RF}^l_{\overline{(i_1,\ldots,i_n)}}} N_{\Lambda_j(t)}+\sum_{j\in \mathcal{RF}^m_i} {}_iN_{\Lambda_j(t)}=0\right\}
\cap  \mathcal{A}(t)  \bigg| \mathbf{\Lambda}(t) \right] \nonumber dt \\
&=&\int_{0}^{\infty} \mathbf{P}\left[\mathcal{A}(t)| \boldsymbol{\Lambda}(t)\right]
\prod_{j\in \mathcal{RF}^l_{\overline{(i_1,\ldots,i_k)}}} \mathbf{P}\left[ N_{\Lambda_j(t)}=0\bigg|\boldsymbol{\Lambda}(t)\right]
\prod_{j\in \mathcal{RF}^m_{{i_1,\ldots,i_k}}}\mathbf{P}\left[\sum_{i\in \mathcal{RC}^m_j} {}_iN_{\Lambda_j(t)}=0\bigg|\boldsymbol{\Lambda}(t)\right]\ dt.
\end{eqnarray*}
The proof is then completed by interchanging the order of integration.
\end{proof}

\begin{proof}[Proof of Corollary \ref{lemma-simultaneous default}]
Under the assumption of linearity, we obviously have that, for $j=1,\ldots,l+m$ and $t\in\mathbf{R}_{0,+}$,
\[
\psi_{\Lambda_j(t)}(1)=\psi_{\Lambda_j}(t)=\mathbf{P}[E_{\Lambda_j}>t],
\]
as well as that
\[
\mathbf{P}[\mathcal{A}(t)]=\mathbf{E}\left[
-\frac{d}{dt} \exp\left\{
-\sum_{j\in\mathcal{RF}^l_{(i_1,\ldots,i_k)}}\Lambda_j t
\right\}
\right]=\mathbf{E}\left[
\exp\left\{
-\sum_{j\in\mathcal{RF}^l_{(i_1,\ldots,i_k)}}\Lambda_j t
\right\}\sum_{j\in\mathcal{RF}^l_{(i_1,\ldots,i_k)}}\Lambda_j
\right].
\]
Consequently, denoting by $L_j$ the c.d.f. of the r.v. $\Lambda_j$,  the integrand in (\ref{thsim}) reduces to
\begin{eqnarray*}
&&\int_{\mathbf{R}_{0,+}^{|\mathcal{RF}_{i_1,\ldots,i_k}|+1}} \sum_{j\in\mathcal{RF}^l_{(i_1,\ldots,i_k)}}\lambda_j\ \exp\left\{-\left(
\sum_{j\in\mathcal{RF}^l_{i_1,\ldots,i_k}}\lambda_j+
\sum_{j\in\mathcal{RF}^m_{i_1,\ldots,i_k}}\lambda_j  | \mathcal{RC}_j|\right)t
\right\}d\prod_{j\in\mathcal{RF}_{i_1,\ldots,i_k}} L_j(\lambda_j) dt \\
&&=\int_{\mathbf{R}_{0,+}^{|\mathcal{RF}_{i_1,\ldots,i_k}|}}
\frac{\sum_{j\in\mathcal{RF}^l_{(i_1,\ldots,i_k)}}\lambda_j}{\sum_{j\in\mathcal{RF}^l_{i_1,\ldots,i_k}}\lambda_j+
\sum_{j\in\mathcal{RF}^m_{i_1,\ldots,i_k}}\lambda_j  | \mathcal{RC}_j|}d\prod_{j\in\mathcal{RF}_{i_1,\ldots,i_k}} L_j(\lambda_j).
\end{eqnarray*}
This completes the proof.
\end{proof}

\begin{proof}[Proof of Proposition \ref{CGS simultaneous default}]
First note that $\Lambda_{c,i_1,\ldots,i_k}^l\sim Ga(\alpha_{c,i_1,\ldots,i_k},1)$ and $\tilde{\Lambda}_{i_1,\ldots,i_k}^m\sim Ga(\gamma_{c,i_1,\ldots,i_k}+K,1)$ appealing to
 Lemma \ref{FL2005}. Then, conditionally on $K=h$, the distribution of
$\Lambda_{c,i_1,\ldots,i_n}^l+\tilde{\Lambda}_{i_1,\ldots,i_k}^m$ is $Ga(\xi_{c,i_1,\ldots,i_k}+h,1)$, and
the conditional probability of simultaneous default
is equal to the expectation of a beta distributed r.v. with parameters
$\alpha_{c,i_1,\ldots,i_k}$ and $\gamma_{c,i_1,\ldots,i_k}+h$, where $h$ is a non-negative real number. The assertion of the
proposition follows evoking the law of iterated expectation.
\end{proof}

\begin{proof}[Proof of Theorem \ref{sp-rho}]
By definition, we have that, for $1\leq i\neq k\leq n$,
\begin{eqnarray}
\label{joint bivariate expectation}
&&(\rho_S(C_{\boldsymbol{\xi}})+3)/{12}=\int\int_{[0,1]^2}u_iu_kdC_{\boldsymbol{\xi}}(u_i,u_k)=\int\int_{[0,1]^2}C_{\boldsymbol{\xi}}(u_i,u_k)du_i du_k \nonumber\\
&=&\int_0^{1}\int_{0}^{u_k^{{\xi_{c,i}}/{\xi_{c,k}}}}u_i^{\frac{\xi_{c,i,\overline{(i,k)}}+\alpha_{c,(i,k)}}{\xi_{c,i}}}
u_k^{\frac{\xi_{c,k,\overline{(i,k)}}}{\xi_{c,k}}}
\left(u_i^{-\frac{1}{\xi_{c,i}}}+u_k^{-\frac{1}{\xi_{c,k}}}-1 \right)^{-\gamma_{c,(i,k)}}
du_i du_k \nonumber\\
&+& \int_0^{1}\int_{u_k^{{\xi_{c,i}}/{\xi_{c,k}}}}^1u_i^{\frac{\xi_{c,i,\overline{(i,k)}}}{\xi_{c,i}}}
u_k^{\frac{\xi_{c,k,\overline{(i,k)}}+\alpha_{c,(i,k)}}{\xi_{c,k}}}
\left(u_i^{-\frac{1}{\xi_{c,i}}}+u_k^{-\frac{1}{\xi_{c,k}}}-1 \right)^{-\gamma_{c,(i,k)}}
du_i du_k \nonumber\\
&=&\int_0^{1}\int_{0}^{u_k^{{\xi_{c,i}}/{\xi_{c,k}}}}u_i^{\frac{\xi_{c,i,\overline{(i,k)}}+\alpha_{c,(i,k)}}{\xi_{c,i}}}
u_k^{\frac{\xi_{c,k,\overline{(i,k)}}}{\xi_{c,k}}}
\left(u_i^{-\frac{1}{\xi_{c,i}}}+u_k^{-\frac{1}{\xi_{c,k}}}-1 \right)^{-\gamma_{c,(i,k)}}
du_i du_k \nonumber\\
&+&\int_0^{1}\int_{0}^{u_i^{{\xi_{c,i}}/{\xi_{c,k}}}}u_i^{\frac{\xi_{c,k,\overline{(i,k)}}+\alpha_{c,(i,k)}}{\xi_{c,k}}}
u_k^{\frac{\xi_{c,i,\overline{(i,k)}}}{\xi_{c,i}}}
\left(u_i^{-\frac{1}{\xi_{c,k}}}+u_k^{-\frac{1}{\xi_{c,i}}}-1 \right)^{-\gamma_{c,(i,k)}}
du_k du_i \notag \\
&=&I_1(\boldsymbol{\xi})+I_2(\boldsymbol{\xi}).
\end{eqnarray}
We further compute $I_1(\boldsymbol{\xi})$ whereas the other integral can be tackled in
a similar fashion. By change of variables and evoking Equation $(3.197(1))$ in Gradshteyn and Ryzhik (2014), we obtain that
\begin{eqnarray*}
\label{i1}
&&I_1(\boldsymbol{\xi})\\
&=&\xi_{c,i}\xi_{c,k}\int_{\mathbf{R}_{0,+}}(1+x)^{-\xi_{c,k}-\xi_{c,k,\overline{(i,k)}}-1}\int_{\mathbf{R}_{0,+}}\left(1+2x+y \right)^{-\gamma_{c,(i,k)}}\left(1+x+y \right)^{{-\xi_{c,i}-\xi_{c,i,\overline{(i,k)}}-\alpha_{c,(i,k)}}-1}dy dx\nonumber\\
&=&\frac{\xi_{c,k}}{2}\int_{\mathbf{R}_{0,+}}
(1+2x)^{-\gamma_{c,(i,k)}}
(1+x)^{-(b+1-\gamma_{c,(i,k)})}{}_2F_1\left(
\gamma_{c,(i,k)},1;2\xi_{c,i}+1;\frac{x}{1+2x}
\right)
dx\\
&=& \frac{\xi_{c,k}}{4}\int_0^{1}(1-v)^{b-1}\left( 1-v/2 \right)^{-(b+1-\gamma_{c,(i,k)})} \ _2F_1\left(\gamma_{c,(i,k)},1;2\xi_{c,i}+1;v/2\right) dv,
\end{eqnarray*}
where $b=2\xi_{c,i}+2\xi_{c,k}-\xi_{c,(i,k)}$.
Furthermore, note that as the ${}_2F_1$ hypergeometric function has the following integral representation for all $v\in\mathbf{R}$,
\[
\frac{1}{2\xi_{c,i}}\ _2F_1\left(\gamma_{c,(i,k)},1;2\xi_{c,i}+1;v/2\right)=\int_0^1 (1-t)^{2\xi_{c,i}-1} \left(1-\frac{v}{2}t\right)^{-\gamma_{c,(i,k)}}dt
\]
(Equation 9.111 in loc. cit.), we obtain the following string of integrals
\begin{eqnarray*}
I_1(\boldsymbol{\xi})&=&\frac{\xi_{c,i} \xi_{c,k}}{2}\int_0^1 (1-t)^{2\xi_{c,i}-1} \int_0^1 \left(1-v\right)^{b-1} \left( 1-\frac{v}{2} \right)^{-(b+1-\gamma_{c,(i,k)})}\left(1-\frac{t}{2}v\right)^{-\gamma_{c,(i,k)}} dv dt\\
&\overset{(1)}{=}& \frac{\xi_{c,i} \xi_{c,k}}{2b} \int_0^1 (1-t)^{2\xi_{c,i}-1} F_1\left(1,b+1-\gamma_{c,(i,k)},\gamma_{c,(i,k)},
b+1;1/2,t/2\right)dt\\
&\overset{(2)}{=}& \frac{\xi_{c,i} \xi_{c,k}}{b} \int_0^1 (1-t)^{2\xi_{c,i}-1} \ _2F_1\left(1,\gamma_{c,(i,k)};b+1;t-1\right)dt\\
&=& \frac{\xi_{c,i} \xi_{c,k}}{b} \int_0^1 y^{2\xi_{c,i}-1} \ _2F_1\left(1,\gamma_{c,(i,k)};b+1;-y\right)dy\\
&\overset{(3)}{=}& \frac{\xi_{c,k}}{2b} \ _3F_2\left(2\xi_{c,i},1,\gamma_{c,(i,k)};2\xi_{c,i}+1,b+1;-1\right),
\end{eqnarray*}
where
$F_1$ is the bivariate hypergeometric function,
and  `$\overset{(1)}{=}$', `$\overset{(2)}{=}$' and `$\overset{(3)}{=}$' hold by
Equations (3.211), (9.182(1)) and (7.512(12)), respectively, in Gradshteyn and Ryzhik (2014).
The expression for $I_2(\boldsymbol{\xi})$ is then by analogy
\begin{eqnarray*}
&&I_2(\boldsymbol{\xi})= \frac{\xi_{c,i}}{2b} \ _3F_2\left(2\xi_{c,k},1,\gamma_{c,(i,k)};2\xi_{c,k}+1,b+1;-1\right).
\end{eqnarray*}
We note in passing that the hypergeometric functions in $I_1(\boldsymbol{\xi})$ and  $I_2(\boldsymbol{\xi})$
converge absolutely since  $b+1-\gamma_{c,(i,k)}>1$ for
$1\leq i\neq k\leq n$.
This completes the proof.
\end{proof}

\begin{proof}[Proof of Corollary \ref{MO-spearman-thm}]
The assertion follows since
${}_3F_2(a,b,0;c,d;z)\equiv 1$, for any real $a,b,c,d,z$.
\end{proof}

\begin{proof}[Proof of Corollary \ref{pro-as-spearman}]
First notice that according to Theorem 3.4.1. in Su and Furman 
(2016b), we have that
\begin{eqnarray}
\label{combine-hyper}
&&\frac{1}{s-2}\left(({s_2-1})\ _3F_2(s_1-1,1,a;s_1,s-1;-1)+{(s_1-1)}\ _3F_2(s_2-1,1,a;s_2,s-1;-1)\right) \nonumber\\
&=&\ _3F_2(a,1,1;s_1,s_2;1),
\end{eqnarray}
where $a,b,c$ are all positive and such that $s_1=a+b>2$, $s_2=a+c>2$, and $s=a+b+c$.  Put $\alpha_{c,(i,k)}\equiv 0$, $a=2\xi_{c,(i,k)}=2\gamma_{c,(i,k)}$, $b=2\xi_{c,i,\overline{(i,k)}}+1$ and $c=2\xi_{c,k,\overline{(i,k)}}+1$, then we have that $s_1=2\xi_{c,i}+1$ and $s_2=2\xi_{c,k}+1$,
and the assertion follows using (\ref{sp-rho-ClaytonMRF-formula}).
\end{proof}

\begin{proof}[Proof of Proposition \ref{pro-strong tail dependence}]
By (\ref{bivariate-ddf}), we have the limit
\begin{eqnarray}
\label{strong-tail-formula1}
\lambda_L(C_{\boldsymbol{\xi}})&=&\lim_{u\downarrow 0}\frac{u^{\frac{\xi_{c,i,\overline{(i,k)}}}{\xi_{c,i}}+\frac{\xi_{c,k,\overline{(i,k)}}}{\xi_{c,k}}+\frac{\alpha_{c,(i,k)}}
{\xi_{c,i}\wedge \xi_{c,k}}}\left(u^{-\frac{1}{\xi_{c,i}}}+u^{-\frac{1}
{\xi_{c,k}}}-1 \right)^{-\gamma_{c,(i,k)}}}{u}.
\end{eqnarray}

First consider the case when
$\xi_{c,i,\overline{(i,k)}}\neq 0$ and/or $\xi_{c,k,\overline{(i,k)}} \neq0$, and set without loss
of generality $\xi_{c,i}<\xi_{c,k}$. Then the limit becomes
\begin{eqnarray*}
\lambda_L(C_{\boldsymbol{\xi}})&=&\lim_{u\downarrow 0}\frac{u^{\frac{\xi_{c,i,\overline{(i,k)}}+\alpha_{c,(i,k)}}{\xi_{c,i}}+\frac{\xi_{c,k,\overline{(i,k)}}}{\xi_{c,k}}}\left(u^{-\frac{1}{\xi_{c,i}}}+u^{-\frac{1}
{\xi_{c,k}}}-1 \right)^{-\gamma_{c,(i,k)}}}{u}\\
&=&\lim_{u\downarrow 0}{u^{-\frac{\gamma_{c,(i,k)}}{\xi_{c,i}}+\frac{\xi_{c,k,\overline{(i,k)}}}{\xi_{c,k}}}\left(u^{-\frac{1}{\xi_{c,i}}}+u^{-\frac{1}
{\xi_{c,k}}}-1 \right)^{-\gamma_{c,(i,k)}}}\\
&=&\lim_{u\downarrow 0}{u^{\frac{\xi_{c,k,\overline{(i,k)}}}{\xi_{c,k}}}\left(1+u^{-\frac{1}
{\xi_{c,k}}+\frac{1}{\xi_{c,i}}}-u^{\frac{1}{\xi_{c,i}}} \right)^{-\gamma_{c,(i,k)}}}=0.
\end{eqnarray*}

In the other case, i.e., when both $\xi_{c,i,\overline{(i,k)}}$ and $\xi_{c,k,\overline{(i,k)}}$ are zero, we have that that $\xi_{c,i}=\xi_{c,k}=\gamma_{c,(i,k)}+\alpha_{c,(i,k)}$,
where $1\leq i\neq k\leq n$, hence limit (\ref {strong-tail-formula1}) becomes
\begin{eqnarray*}
\lambda_L(C_{\boldsymbol{\xi}})&=&\lim_{u\downarrow 0}\frac{u^{\frac{\alpha_{c,(i,k)}}{\xi_{c,i}}}\left(u^{-\frac{1}{\xi_{c,i}}}+u^{-\frac{1}
{\xi_{c,i}}}-1 \right)^{-\gamma_{c,(i,k)}}}{u}\\
&=&\lim_{u\downarrow 0}{u^{-\frac{\gamma_{c,(i,k)}}{\xi_{c,i}}}\left(2u^{-\frac{1}{\xi_{c,i}}}-1 \right)^{-\gamma_{c,(i,k)}}}\\
&=&\lim_{u\downarrow 0}{\left(2-u^{\frac{1}
{\xi_{c,i}}} \right)^{-\gamma_{c,(i,k)}}}=2^{-\gamma_{c,(i,k)}}.\\
\end{eqnarray*}
This completes the proof.
\end{proof}

\begin{proof}[Proof of Proposition \ref{pro-two indices}]
We only need to prove (\ref{kl-genCl}), as the other formula follows from the relationship
$\chi_L(C)=2/\kappa_L(C)-1$. Then, for $\xi_{c,i}< \xi_{c,k}$ and $1\leq i\neq k\leq n$, we
 have, by  (\ref{bivariate-ddf}) and for $u\in(0,\ 1)$, that
\begin{eqnarray*}
C_{\boldsymbol{\xi}}(u,u)&=&u^{\frac{\xi_{c,i,\overline{(i,k)}}+\alpha_{c,(i,k)}}{\xi_{c,i}}+\frac{\xi_{c,k,\overline{(i,k)}}}{\xi_{c,k}}}\left(u^{-\frac{1}{\xi_{c,i}}}+u^{-\frac{1}{\xi_{c,k}}}-1\right)^{-\gamma_{c,(i,k)}}\\
&=&u^{\frac{\xi_{c,i,\overline{(i,k)}}+\gamma_{c,(i,k)}+\alpha_{c,(i,k)}}{\xi_{c,i}}+\frac{\xi_{c,k,\overline{(i,k)}}}{\xi_{c,k}}}\left(1+u^{-\frac{1}{\xi_{c,k}}+\frac{1}{\xi_{c,i}}}-u^{\frac{1}{\xi_{c,i}}}\right)^{-\gamma_{c,(i,k)}} \\
&=&u^{1+\frac{\xi_{c,k,\overline{(i,k)}}}{\xi_{c,k}}}\left(1+u^{-\frac{1}{\xi_{c,k}}+\frac{1}{\xi_{c,i}}}-u^{\frac{1}{\xi_{c,i}}}\right)^{-\gamma_{c,(i,k)}},
\end{eqnarray*}
which yields $\kappa_L(C_{\boldsymbol{\xi}})=1+\frac{\xi_{c,k,\overline{(i,k)}}}{\xi_{c,k}}$.
For $\xi_{c,i}\geq \xi_{c,k}$, we have by analogy that
 $\kappa_L(C_{\boldsymbol{\xi}})=1+\frac{\xi_{c,i,\overline{(i,k)}}}{\xi_{c,i}}$, which
 establishes  (\ref{kl-genCl}).  This completes the proof.
\end{proof}

\begin{proof}[Proof of Theorem \ref{pro-general mdp}]
Assume without loss of generality that $\xi_{c,i}>\xi_{c,k},\ 1\leq i\neq k\leq n$, which is the case when the singularity curve of the
Clayton MRF copula lies in the upper left section of its domain of definition. Also, for $1\leq i\neq k\leq n$, let
\[
\delta_i=\frac{\xi_{c,i,\overline{(i,k)}}+\alpha_{c,(i,k)}}{\xi_{c,i}}-\frac{\xi_{c,k,\overline{(i,k)}}}{\xi_{c,k}}
\textnormal{ and }
\delta_k=\frac{\xi_{c,i,\overline{(i,k)}}}{\xi_{c,i}}-\frac{\xi_{c,k,\overline{(i,k)}}+\alpha_{c,(i,k)}}{\xi_{c,k}},
\]
then, by (\ref{bivariate-ddf}), we have that
\begin{eqnarray}
\label{MDP-copula function1}
&&C_{\boldsymbol{\xi}}\left(x,{u^2}/{x}\right) \\
&=&\left\{
\begin{array}{ll}
x^{\delta_i}u^
{2(1-\xi_{c,(i,k)}/\xi_{c,k})}
\left(x^{-\frac{1}{\xi_{c,i}}}+\left(\frac{u^2}{x}\right)^{-\frac{1}{\xi_{c,k}}}-1 \right)^{-\gamma_{c,(i,k)}},& x\leq
u^{{2\xi_{c,i}}/{(\xi_{c,i}+\xi_{c,k})}}\\
x^{\delta_k}u^
{2(1-\gamma_{c,(i,k)}/\xi_{c,k})}
\left(x^{-\frac{1}{\xi_{c,i}}}+\left(\frac{u^2}{x}\right)^{-\frac{1}{\xi_{c,k}}}-1 \right)^{-\gamma_{c,(i,k)}},& x>u^{{2\xi_{c,i}}/{(\xi_{c,i}+\xi_{c,k})}}
\end{array}
\right., \notag
\end{eqnarray}
and we are interested in the behaviour of (\ref{MDP-copula function1}) on the interval $[u^2,\ 1]$, which is to this end split into two intervals
$[u^2,\ u^{{2\xi_{c,i}}/{(\xi_{c,i}+\xi_{c,k})}})$ and $[u^{{2\xi_{c,i}}/{(\xi_{c,i}+\xi_{c,k})}},\ 1]$ with $u\in(0,\ 1)$.

For $x\in \left[u^2,u^{{2\xi_{c,i}}/{(\xi_{c,i}+\xi_{c,k})}}\right)$, we first note that
\[
\frac{\partial}{\partial x}\ln \left(C_{\boldsymbol{\xi}}(x,u^2/x)\right)=0
\]
if and only if
\begin{eqnarray*}
\label{zeta-i}
\zeta_i(x):=\left(\delta_i+\frac{\gamma_{c,(i,k)}}{\xi_{c,i}} \right) x^{-{1}/{\xi_{c,i}}}+\left(\delta_i-\frac{\gamma_{c,(i,k)}}{\xi_{c,k}}  \right)\left(\left(\frac{u^2}{x}\right)^{-{1}/{\xi_{c,k}}}-1\right)-
\frac{\gamma_{c,(i,k)}}{\xi_{c,k}}=0
\end{eqnarray*}
or, equivalently, if and only if
\begin{equation}\label{eta-i}
\eta_i(x):=\zeta_i(x)x^{-\frac{1}{\xi_{c,k}}}=0.
\end{equation}
Equation (\ref{eta-i}) does not have solutions for  $x\in\left(u^2,u^{{2\xi_{c,i}}/{(\xi_{c,i}+\xi_{c,k})}}\right)$
since $\eta_i(x)$ is non-increasing therein
\begin{eqnarray*}
\eta_i'(x)
&=&{x^{-\frac{1}{\xi_{c,i}}-\frac{1}{\xi_{c,k}}-1}}
\left(- \left(\delta_i+\frac{\gamma_{c,(i,k)}}{\xi_{c,i}} \right)\left(\frac{1}{\xi_{c,i}}+\frac{1}{\xi_{c,k}}\right)+\frac{\delta_i}{\xi_{c,k}} x^{\frac{1}{\xi_{c,i}}} \right)\\
&\leq&-
{x^{-\frac{1}{\xi_{c,i}}-\frac{1}{\xi_{c,k}}-1}}
\frac{\delta_i}{\xi_{c,i}}
\leq 0
\end{eqnarray*}
and such that
 \begin{eqnarray*}
 \eta_i(u^2)&=&u^{-\frac{2}{\xi_{c,k}}}\left(\left(\delta_i+\frac{\gamma_{c,(i,k)}}{\xi_{c,i}} \right) u^{-{2}/{\xi_{c,i}}}-
\frac{\gamma_{c,(i,k)}}{\xi_{c,k}}\right)>0; \\
\eta_i(u^{{2\xi_{c,i}}/{(\xi_{c,i}+\xi_{c,k})}})&=&u^{-\frac{2\xi_{c,i}}{\xi_{c,k}(\xi_{c,i}+\xi_{c,k})}}
\left(\left(2\delta_i+ \frac{\gamma_{c,(i,k)}}{\xi_{c,i}}-\frac{\gamma_{c,(i,k)}}{\xi_{c,k}} \right)u^{-\frac{2}{\xi_{c,i}+\xi_{c,k}}}-\delta_i\right)\\
&=&u^{-\frac{2\xi_{c,i}}{\xi_{c,k}(\xi_{c,i}+\xi_{c,k})}}\left(\left(1+\delta_i-\frac{\gamma_{c,(i,k)}}{\xi_{c,k}} \right)u^{-\frac{2}{\xi_{c,i}+\xi_{c,k}}}-\delta_i\right)>0.
\end{eqnarray*}
Hence, we conclude that $x\mapsto C_{\boldsymbol{\xi}}(x,u^2/x)$ is strictly increasing on
$(u^2,\ u^{{2\xi_{c,i}}/({\xi_{c,i}+\xi_{c,k}})})$ and cannot attain its maximum or
maxima there.

Let us now turn to $x\in [u^{{2\xi_{c,i}}/({\xi_{c,i}+\xi_{c,k}})},1]$. We note that
\begin{eqnarray*}
\frac{\partial}{\partial x}\ln \left(C_{\boldsymbol{\xi}}(x,u^2/x)\right)=0
\end{eqnarray*}
if and only if
\begin{eqnarray*}
\label{zeta-j}
\zeta_k(x):=\left(\delta_k+\frac{\gamma_{c,(i,k)}}{\xi_{c,i}} \right) \left(x^{-{1}/{\xi_{c,i}}}-1\right)+\left(\delta_k-\frac{\gamma_{c,(i,k)}}{\xi_{c,k}}  \right)\left(\frac{u^2}{x}\right)^{-{1}/{\xi_{c,k}}}+\frac{\gamma_{c,(i,k)}}{\xi_{c,i}}=0
\end{eqnarray*}
if and only if
\begin{equation}
\label{eta-k}
\eta_k(x):=\zeta_k(x)x^{\frac{1}{\xi_{c,i}}}=0.
\end{equation}
Equation (\ref{zeta-j}) may have at most one solution for
$x\in (u^{{2\xi_{c,i}}/({\xi_{c,i}+\xi_{c,k}})},1)$
and $u\in(0,\ 1)$, as
\begin{eqnarray*}
\eta_k(1)&=&-\left(\frac{\gamma_{c,(i,k)}+\alpha_{c,(i,k)}}{\xi_{c,i}}\right)u^{-2/\xi_{c,k}}
+\frac{\gamma_{c,(i,k)}}{\xi_{c,i}}<0; \\
 \eta_k'(x)
&=&
{x^{\frac{1}{\xi_{c,i}}-1}}\left(\left(\delta_k- \frac{\gamma_{c,(i,k)}}{\xi_{c,k}} \right)\left(\frac{1}{\xi_{c,i}}+\frac{1}{\xi_{c,k}}\right)
\left(
\frac{u^2}{x}
\right)^{-\frac{1}{\xi_{c,k}}} -\frac{\delta_k}{\xi_{c,k}} \right) \\
&=&
{x^{\frac{1}{\xi_{c,i}}-1}}\left(\left(-\frac{\xi_{c,(i,k)}}{\xi_{c,i}} \right)\left(\frac{1}{\xi_{c,i}}+\frac{1}{\xi_{c,k}}\right)
\left(
\frac{u^2}{x}
\right)^{-\frac{1}{\xi_{c,k}}} -\frac{\delta_k}{\xi_{c,k}} \right)
\leq 0;
\end{eqnarray*}
and the sign of
\begin{equation}
\eta_k\left(u^{2\xi_{c,i}/(\xi_{c,i}+\xi_{c,k})}\right)=\left(\delta_k-\frac{
\alpha_{c,(i,k)}}{\xi_{c,i}} \right)-\delta_ku^{\frac{2}{\xi_{c,i}+\xi_{c,k}}}\\
\end{equation}
is unknown.  Consequently, we have that the function $x\mapsto C_{\boldsymbol{\xi}}(x,u^2/x)$ may or may not achieve its maximum on the interval $(u^{{2\xi_{c,i}}/({\xi_{c,i}+\xi_{c,k}})},1)$, and there may be one such maximum, only.

To summarize, there are two possibilities:

\begin{figure}[t!]
\centering
\includegraphics[width=7.5cm,height=7.5cm]{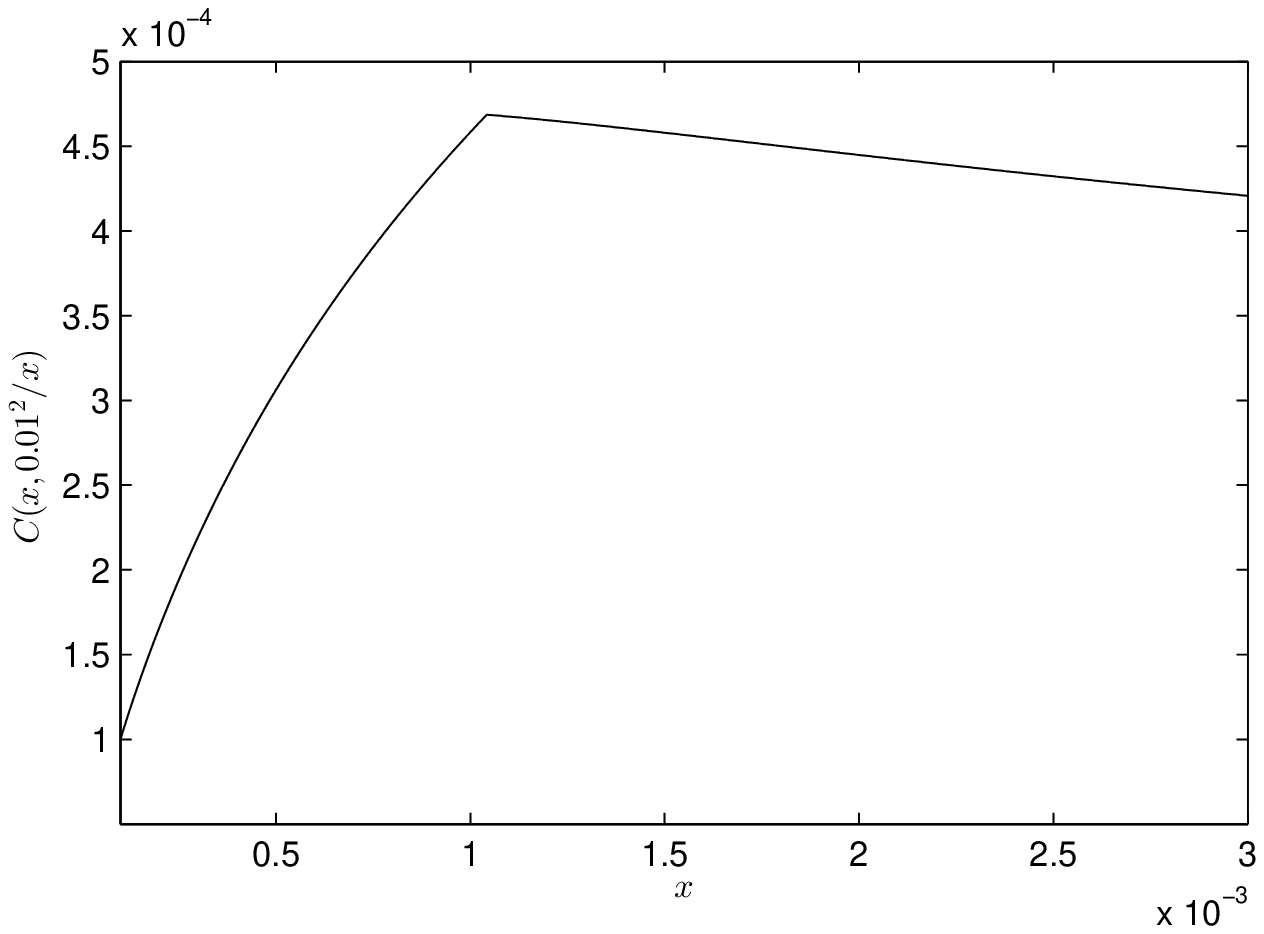}
\hfill
\includegraphics[width=7.5cm,height=7.5cm]{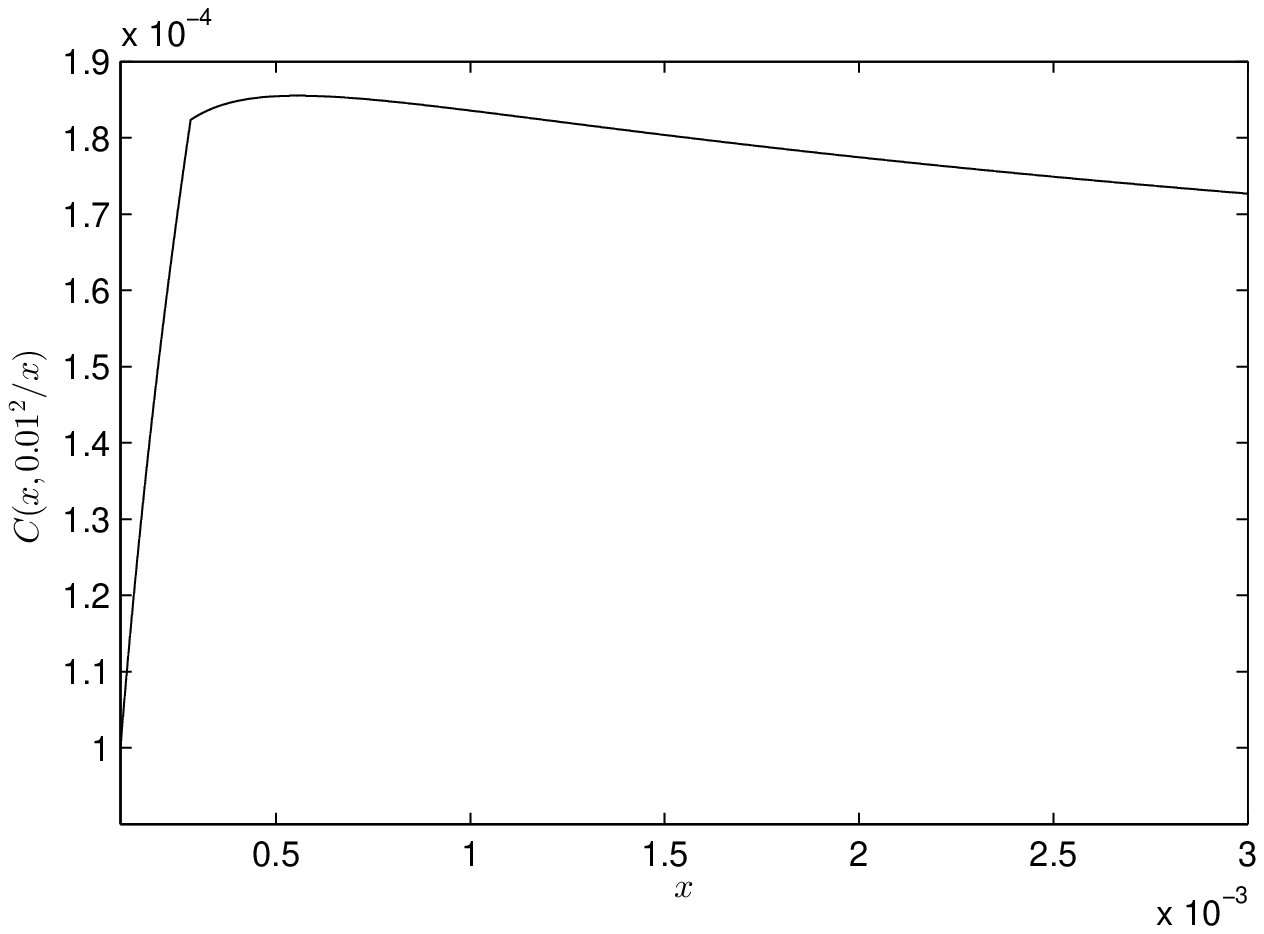}
\caption{The function $C(x,0.01^2/x)$ for $\xi_{c,i,\overline{(i,k)}}=3,\ \xi_{c,k,\overline{(i,k)}}=0.3,\ \gamma_{c,(i,k)}=0.5,\  \alpha_{c,(i,k)}=0.6$ (left panel) and $\xi_{c,i,\overline{(i,k)}}=10,\ \xi_{c,k,\overline{(i,k)}}=0.3,\ \gamma_{c,(i,k)}=0.5,\   \alpha_{c,(i,k)}=0.6$ (right panel).}
\label{fig-mdp-function}
\end{figure}

(1) (Figure \ref{fig-mdp-function}, left panel)
- the function
$x\mapsto C_{\boldsymbol{\xi}}(x,u^2/x)$ is strictly increasing on
$(u^2,\ u^{{2\xi_{c,i}}/({\xi_{c,i}+\xi_{c,k}})})$ and  strictly decreasing on
$(u^{{2\xi_{c,i}}/({\xi_{c,i}+\xi_{c,k}})},\ 1)$. Therefore its maximum is achieved
at $x=u^{{2\xi_{c,i}}/({\xi_{c,i}+\xi_{c,k}})}$, the function of maximal dependence
is $\varphi^\ast(u)=u^{{2\xi_{c,i}}/({\xi_{c,i}+\xi_{c,k}})}$ and the
path of maximal dependence is $(u^{{2\xi_{c,i}}/({\xi_{c,i}+\xi_{c,k}})},\
u^{{2\xi_{c,k}}/({\xi_{c,i}+\xi_{c,k}})})$, where $u\in [0,\ 1]$.
Also, the indices $\lambda_L^\ast$,
$\kappa_L^\ast$ and $\chi_L^\ast$ follow, respectively, from (\ref{lambdaast}),
(\ref{chaiast}) and (\ref{kappaast}).

(2) (Figure \ref{fig-mdp-function}, right panel) - the function
 $x\mapsto C_{\boldsymbol{\xi}}(x,u^2/x)$ has its maximum
on $(u^{{2\xi_{c,i}}/({\xi_{c,i}+\xi_{c,k}})},\ 1)$. In this case, we cannot
formulate the function of maximal tail dependence explicitly, and so the path of maximal
tail dependence is unknown. Nevertheless, the indices of maximal tail dependence can
be written in a closed form. In this respect, we know that the function of maximal
dependence exists, is unique and satisfies the equation $\zeta_k(x)=0$, or equivalently
\begin{equation}\label{solution-2}
x=u^{{2\xi_{c,i}}/{(\xi_{c,i}+\xi_{c,k})}}r(x),
\end{equation}
where
 \[
r(x)=\left(\left(\delta_k-\frac{\gamma_{c,(i,k)}}{\xi_{c,k}} \right) \Big/\left(\delta_kx^{\frac{1}{\xi_{c,i}}} -\left(\delta_k+\frac{\gamma_{c,(i,k)}}{\xi_{c,i}}\right) \right)\right)^{-\frac{\xi_{c,i}\xi_{c,k}}{\xi_{c,i}+\xi_{c,k}}}.
\]
Then the substitution of  (\ref{solution-2}) into  (\ref{MDP-copula function1}) yields
\begin{eqnarray*}
C_{\boldsymbol{\xi}}\left(\varphi^*(u),{u^2}/{\varphi^*(u)}\right) =u^{1+(\xi_{c,i,\overline{(i,k)}}+\xi_{c,k,\overline{(i,k)}})/(\xi_{c,i}+\xi_{c,k})}s(u),
\end{eqnarray*}
where the function $s(u)$ is such that
$\lim _{u\downarrow 0} s(u)=const(\in\mathbf{R}_{0,+})$.
The index of maximal weak lower tail dependence is obtained from the relationship $\xi^*_L=2/\kappa^*_L-1$.  Finally,  the index of maximal strong lower tail
dependence is non-zero if and only if $\kappa^*_L(C_{\boldsymbol{\xi}})\equiv1$, which in
the context of the Clayton MRF copulas implies exchangeability as in this case
$\xi_{c,i,\overline{(i,k)}}\equiv 0$ and  $\xi_{c,k,\overline{(i,k)}}\equiv 0$.
This completes the proof.
\end{proof}

\end{document}